\newtheorem{theorem}{Theorem}[section]
\newtheorem{corollary}{Corollary}[section]
\newtheorem{definition}{Definition}[section]
\newtheorem{lemma}[theorem]{Lemma}
\newtheorem{proof}[theorem]{Proof}
\title{Disentangling Locality and Entropy in \\ Ranking Distillation}
\author{Andrew Parry \quad Debasis Ganguly \quad Sean MacAvaney \\
School of Computing Science, University of Glasgow \\
\texttt{\small a.parry.1@research.gla.ac.uk, \{Debasis.Ganguly, Sean.MacAvaney\}@glasgow.ac.uk}
}
\begin{document}

\maketitle

\begin{abstract}

  The training process of ranking models involves two key data selection decisions: a sampling strategy (which selects the data to train on), and a labeling strategy (which provides the supervision signal over the sampled data). Modern ranking systems, especially those for performing semantic search, typically use a ``hard negative'' sampling strategy to identify challenging items using heuristics and a distillation labeling strategy to transfer ranking ``knowledge'' from a more capable model. In practice, these approaches have grown increasingly expensive and complex---for instance, popular pretrained rankers from SentenceTransformers involve 12 models in an ensemble with data provenance hampering reproducibility. Despite their complexity, modern sampling and labeling strategies have not been fully ablated, leaving the underlying source of effectiveness gains unclear.
  Thus, to better understand why models improve and potentially reduce the expense of training effective models, we conduct a broad ablation of sampling and distillation processes in neural ranking. We frame and theoretically derive the orthogonal nature of model geometry affected by example selection and the effect of teacher ranking entropy on ranking model optimization, establishing conditions in which data augmentation can effectively improve bias in a ranking model. Empirically, our investigation on established benchmarks and common architectures shows that sampling processes that were once highly effective in contrastive objectives may be spurious or harmful under distillation. We further investigate how data augmentation---in terms of inputs and targets---can affect effectiveness and the intrinsic behavior of models in ranking. Through this work, we aim to encourage more computationally efficient approaches that reduce focus on contrastive pairs and instead directly understand training dynamics under \textit{rankings}, which better represent real-world settings.
\end{abstract}

\section{Introduction}
Pre-trained language Models (PLMs)~\citep{vaswani:2023, devlin:2019} have been shown to be highly effective in ad-hoc ranking tasks~\citep{DBLP:series/synthesis/2021LinNY}. By training on large labeled datasets~\citep{nguyen:2016}, they can often outperform classical term-weighting models~\citep{DBLP:journals/corr/abs-1901-04085}. Since these early works, a key direction in improving PLM-based ranking models has been improving their data augmentation pipelines, which typically now combine ``hard'' negative mining~\citep{karpukhin:2020, qu:2021} (the deliberate selection of challenging non‑relevant texts), and distillation of relevance estimation from an existing teacher model~\citep{hinton:2015, lin:2020, hofstätter:2021}. While these techniques have independently demonstrated clear gains in representation learning~\citep{hsu:2021}, their interaction in ranking distillation settings is poorly understood and is applied in several works with little or no ablation~\citep{xiao:2022, ren:2023, song:2023, wang:2024}.

In Information Retrieval (IR), hard negatives are typically sampled from heuristic candidate sets scored by a preliminary model~\citep{karpukhin:2020, gao:2021}. In contrastive objectives, increasing the locality of (reducing the distance between) the candidate set creates a more challenging classification task~\citep{gutmann:2010, ceylan:2018}, often improving downstream effectiveness. Distillation, in turn, replaces binary labels with soft targets from a teacher~\citep{bucila:2006, ba:2013, hinton:2015}, allowing students to match or surpass larger estimators~\citep{pradeep:2022, xiao:2022, pradeep:2023}. Contemporary systems couple the two: teachers are employed both to score and to label documents, producing multistage cascades of models that must be trained, stored, and queried at scale.  Figure~\ref{fig:sampling} exemplifies this ossification, showing the five‑stage \textit{SentenceTransformers} pipeline whose twelve cross‑trained models are further filtered by a classifier~\citep{reimers:2019}, sometimes upon previous iterations of each other. Such complexity is problematic.  Each additional model inflates computational cost and CO$_2$ footprint~\citep{DBLP:conf/sigir/ScellsZZ22} and hinders reproducibility~\citep{DBLP:conf/sigir/WangMMO22}. From a theoretical standpoint, negative-selection heuristics influence only which instances are labelled, not the Lipschitz-geometry or teacher-entropy terms that influence generalization (cf. \ref{thm:main}).
\begin{wrapfigure}{r}{0.50\textwidth}  %
    \centering
    \vspace{-6pt}                      %
    \includegraphics[width=\linewidth]{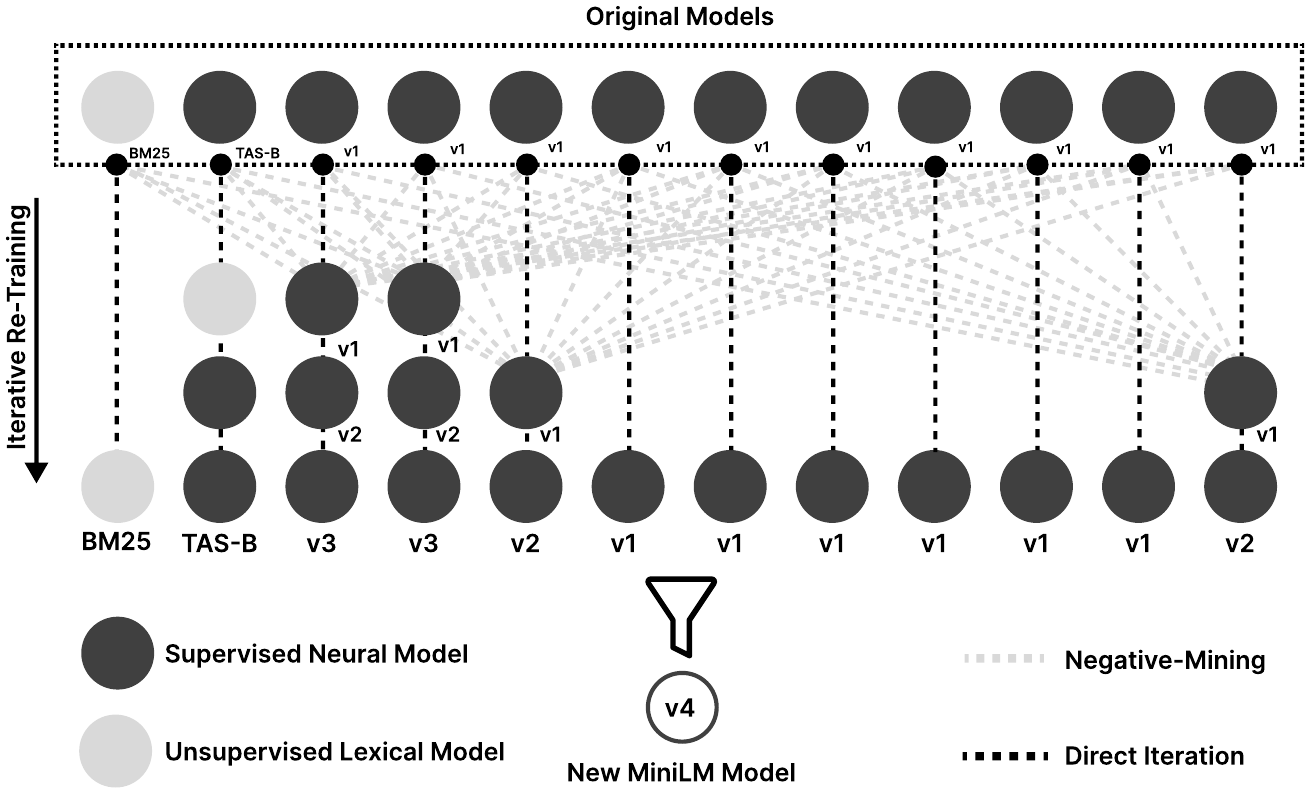}
    \caption{The existing SentenceTransformers~\citep{reimers:2019} pipeline for hard negatives. Even naive sampling pipelines can often have three stages of prior sampling~\citep{song:2023, wang:2024}. This practise reduces our ability to attribute effectiveness gain to a particular source and more broadly replicate this process due to cross-mining of examples from different stages in the pipeline.}
    \label{fig:sampling}
    \vspace{-6pt}                      %
\end{wrapfigure}
Furthermore, existing work in increasing the ``hardness'' of negatives often focuses solely on the domain of a ranking~\citep{karpukhin:2020, gao:2021, qu:2021}, either using expensive ensembling approaches as shown in Figure~\ref{fig:sampling} or iteratively choosing challenging domains based on heuristics such as model uncertainty~\citep{xiong:2020, zhan:2021}. Thus, much focus is on reducing epistemic uncertainty, which may lead to a highly confident model while neglecting the irreducible aleatoric uncertainty inherent in relevance labels. Such is the appeal of distillation that there may be multiple relevant documents within a single instance.

The notion of negative mining in ranking tasks is largely an artifact of contrastive objectives, often applied in representation learning. This nomenclature can be extended to functions such as ad-hoc search and reinforcement learning from human feedback~\citep{christiano:2017, rafailov:2024}; however, the increasing use of distillation or explicit annotation means that we operate over explicit rankings instead of solely positive-negative pairs. Thus, both theoretically and empirically, we investigate common training settings controlling for heuristics governing example locality and the entropy of target distributions. We underpin empirical ablations with a generalisation bound over ranking distillation (Sec. 3) whose bias term depends exclusively on (i) the intrinsic diameter of the query manifold and (ii) the teacher’s pairwise entropy—two quantities unchanged by additional negative-mining stages.

In isolating the contributions of ranking domains and target distributions in training settings, our primary contributions are two-fold: \textit{1) We show complex pipelines to be largely identical in effectiveness to naive approaches under semi-supervision. 2) We provide clear empirical evidence of the causal factors in model effectiveness when applying data augmentation.}

In investigating these factors, we aim to focus research on factors of effectiveness that are appropriate given a \textit{particular} search setting and reduce the spurious training and inference of multiple models.

\subsection{Related Work}

\noindent \textbf{Data Selection in Ranking}
Neural ranking models based on PLMs often apply contrastive objectives which benefit from the selection of similar instances, as these objectives become more challenging when examples are more similar in a geometric space but are different in terms of the target objective. \citet{gao:2021} argued that greater locality and a greater number of samples inspired by NCE~\citep{gutmann:2010} could further enhance the benefit of localized negatives, finding that negatives sampled from more precise rankings yielded greater effectiveness, which scales with the number of samples. Additionally, several approaches have been proposed to apply active learning to select negative examples which are most challenging to the model, which, while being quite effective, require a sufficient number of annotated queries to yield significant results~\citep{althammer:2023}. Several works propose that the flaw in negative mining is that as we produce an ever stronger source of negatives, due to label sparsity, we inevitably sample false negatives; as such, the notion of de-noised negatives has been proposed~\citep{qu:2021} and has become increasingly complex~\citep{reimers:2019} as illustrated in Figure \ref{fig:sampling}.

Distillation in some form has been applied both as weak supervision~\cite{dehghani:2017} and more commonly from a larger model in ranking~\citep{lin:2020, hofstätter:2021, qu:2021, xiao:2022}. For distillation minimal investigation has occurred to our knowledge to select an appropriate domain. Initial approaches adopted similar negative mining strategies those in contrastive settings~\citep{lin:2020, hofstätter:2021}, increasingly selection moves towards top-$k$ elements~\citep{pradeep:2023, schlatt:2025} as opposed to sampling due to both computational expense of large teacher models~\citep{sun:2023b} where a larger number of elements would not be ranked without being utilised and due to the ever increasing precision of PLM-based approaches~\citep{sun:2023}. For parity with contrastive objectives and analysis, we investigate the sampling of elements instead of top-$k$.

\noindent  \textbf{Understanding Distillation}
Knowledge distillation aims to transfer task effectiveness from one model to another. In ranking this approach has been applied extensively both as a weak-supervision signal for bootstrapping early attempts at PLM-based ranking and for efficiently training lightweight ranking models. Though some investigations provided qualitative explanations for the effectiveness of knowledge distillation~\citep{ba:2013, hinton:2015}, recent work has focused on more principled justifications of distillation and divergence-based learning objectives. The effectiveness of a teacher-student distillation setting have been examined in terms of intrinsic task difficulty~\citep{ji:2020} and the degree to which a student follows a teacher distribution~\citep{nagarajan:2019} suggesting that divergence from a teacher distribution is not a problem in model generalisation as shown empirically by methods in retrieval which ignore the exact densities of discrete ranking distributions~\citep{pradeep:2023, schlatt:2025}. In terms of explicit generalisation bounds, \citet{hsu:2021} provide a bound under uniform convergence, using distillation as a vector for understanding the original teacher model, we diverge from this setting as in downstream Information Retrieval we focus on trading off effectiveness for reduced latency. In terms of representation learning, the theoretical implications of data selection by an existing model have been explored~\citep{lin:2024} however this work does not extend to divergence-based losses explored in this work.

\section{Theoretical Analysis}
\label{sec:theory}

We now formalise the contribution of data augmentation to ranking model optimisation. We provide a generalisation bound in terms of sample locality and target entropy towards understanding where data selection can improve effectiveness.

\subsection{Preliminaries}
\noindent \textbf{The Ranking Problem} Given a corpus of texts, $\mathcal{C} = \{D_i\}_1^{|\mathcal{C}|}$ and a query $Q$, a top-$k$ ranking $\mathcal{R}=[D_i]_{1}^k$ (where $k \ll |\mathcal{C}|$) ordered by estimated relevance to $q$ determined by estimates some model $f: \mathcal{X} \rightarrow \mathbb{R}$ where $\mathcal{X}\equiv(Q,D)$. A learned ranking model is often modeled as an unnormalized estimator of $p(D | Q)$~\citep{robertson:1995},\footnote{$D$ may be one or more texts depending on architecture} modeling the likelihood of $D$ being relevant to $Q$. We focus on two common architectures, cross-encoders~\citep{nogueira:2020} and bi-encoders~\citep{karpukhin:2020}. A cross-encoder treats ranking as a regression over the joint representation of a query and document encoded by a PLM from which a relevance score is estimated. A bi-encoder instead separately encodes queries and documents, treating ranking as a maximum inner-product search problem over pooled latent representations.

\noindent \textbf{Training Ranking Models} Let $\mathcal{Q}$ denote a set of training queries. For each $Q \in \mathcal{Q}$ we observe a finite candidate list $\mathcal{D}_Q$ commonly treated as pairs $\mathcal{X}_Q = \{(Q, D_i): \forall D_i \in \mathcal{D}_Q\}$. Each element of $\mathcal{X}_Q$ can be assigned a binary relevance label $\mathcal{Y}_Q=\{y_i\}_1^{|\mathcal{X}_Q|}, y\in \{0,1\}$. Commonly, solely `positive' candidates are explicitly annotated (i.e., $y \gets 1$)~\citep{nguyen:2016}, forming a labelled set $\mathcal{L}$. All other elements are sampled from a larger set of similar documents chosen by a heuristic, forming a pseudo-negative set $\mathcal{U}$. In semi-supervised settings, targets $\mathcal{Y}_Q$ are instead determined by an existing teacher model $g$ such that $\mathcal{Y}_Q=\{g(Q, D), \forall (Q,D) \in \mathcal{X}_Q\}$.

\noindent \textbf{Choosing Pairs} It is common to condition the sampling of pseudo-negative examples on some existing scorer~\citep{karpukhin:2020, qu:2021}. Formally, let $\mu_Q$ define a query-specific measure over $\mathcal{X}$, and let $\nu_Q$ denote a biased measure derived from some existing model for data selection. Recent approaches use an ensemble of systems~\citep{qu:2021, song:2023} inducing $\nu_Q$, which, in contrastive settings, can be further filtered by a model $g$ to ensure that $d(\mathcal{X}_i, \mathcal{X}_j) > \epsilon, \forall \mathcal{X}_{j, i \neq j} \in \mathcal{X}$. This assumes that $\mathcal{X}_i$ can be a labeled positive for the contrastive objective. This step, given its expense and implied confidence in the discriminative ability of $g$, would imply that one should learn an approximation of the teacher model $g$ in a semi-supervised setting as opposed to solely filtering a candidate distribution. Nevertheless, this approach of ensembling and filtering is frequently employed, and thus we consider it for completeness.

\noindent \textbf{Problem Setting} The subjectivity of relevance and the scale of modern web-scale corpora make estimation of a ranking \emph{intrinsically subjective}~\citep{voorhees:1998, parry:2025}:  for most queries, we neither observe a complete ordering of candidates nor can we assume perfect recall in finding relevant documents, thus there can be many relevant documents beyond the single-relevant document constraint of contrastive learning. We therefore often work with \textit{a)} a sparse set of human judgements and noisy negative examples, or \textit{b)} a teacher model $g:\mathcal X\!\to\!\mathbb R$ trained on auxiliary data. The objective of student $f: \mathcal{X} \to \mathbb{R}$ is to rank elements from a structured space $\mathcal{X}$ equipped with metric $d$. Given a query $Q$, the goal is to learn a scoring function that produces rankings aligned with a teacher model $g$.

Recall that for each query $Q\in \mathcal{Q}$ we have a candidate pairs $\mathcal{X}_Q=\{(Q, D_{1}),\dots,(Q,D_{m_Q})\}$ together with labels $\mathcal{Y}_{Q}\in\{y_0,y_1,\dots,y_{m_Q}\}\cup\{\varnothing\}$. A student ranker $f:\mathcal X\!\to\!\mathbb R$ outputs real‑valued scores whose descending order defines the predicted ranking. Our loss criteria will contrast query-document pairs such that we look to minimise the pair-wise risk of misordering two pairs compared to their order under model $g$. Metric structure matters fundamentally because our student learns in a setting where available training data represents a limited and often biased sample from the true space of relevant elements around an anchor query. The geometry constrains how knowledge can transfer between observed pairs.
\label{subsec:problem}

Thus, our problem lies in the contribution of data augmentation to the effectiveness of ranking models. Distillation through criteria such as RankNet~\citep{burges:2010} blurs the boundary between classical notions of contrastive learning and knowledge distillation. No score values are used, and technically, contrastive objectives do the same, albeit with arbitrary negative ordering, both conditioned on some teacher (or negative miner) $g$. Negative sampling and knowledge distillation can be seen as orthogonal; sampling provides suitable observations given a downstream task and requires some labeling process for optimization; distillation provides the labeling process for optimization, but does not provide an explicit selection process for observations.

\subsection{Notation and Definitions}
\label{subsec:entropy}

Assuming student model $f$ and teacher model $g$, we use the output of model $g$ as targets $\mathcal{Y}$. We define locality in terms of a query-specific measure $\mu_Q$ over our input space $\mathcal{X}$ modelling the geometry of relevant elements conditioned on $q$. Formally, let $(\mathcal X,\mu_Q,d)$ be a metric–measure space with complete space $\mathcal{X}$, measure $\mu_Q$ conditional on $Q$ and distance $d$ (we use cosine distance over latent representations), define the essential diameter of this space as
\[\Delta_Q = \operatorname*{ess\,sup}_{(x,x') \in \mathcal{X}^2} \ d(x,x') \text{ with respect to } \mu_Q \otimes \mu_Q
\tag{1}\label{eq:diameter}
\] 

Hypothesising that due to the higher entropy of ranking targets encode additional useful information, similarly to the propositions of \citet{hinton:2015}, we model the entropy of a teacher ranking under pair-wise preferences. We investigate losses that can be seen as Bregman divergences as they are prevalent in the empirical ranking literature (RankNet~\citep{burges:2010}, MarginMSE~\citep{hofstätter:2021}, KL-divergence~\citep{kullback:1951}) and admit clean theoretical analysis. Define the pair-wise risk of $f$ as:

\begin{definition}[Pair-wise Risk]
For a scorer $f$ and query measure $\mu_Q$, the \emph{pair-wise risk} is the probability of mis-ordering:
\[\mathcal{R}(f) = \Pr_{(x,x') \sim \mu_Q^{\otimes 2}}[f(x) < f(x')].
\tag{2}
\]
\end{definition}

To minimise this risk, we use a distillation loss of the form:

\begin{definition}[Bregman Distillation Loss]
For convex potential $\phi$, the distillation loss on pair $(x,x')$ is
\[\ell(f,g;x,x') = D_\phi\left(f(x)-f(x') \,\|\, g(x)-g(x')\right),
\tag{3}
\]
where $D_\phi(a\|b) = \phi(a) - \phi(b) - \phi'(b)(a-b)$ is the Bregman divergence.
\end{definition}

Under this pair-wise setting we define:
\begin{definition}[Query Entropy]
For teacher $g$ and query measure $\mu_Q$, define the \emph{query entropy} as
\[H(g) = \mathbb{E}_{(x,x') \sim \mu_Q^{\otimes 2}}\left[H\left(\mathbf{1}[g(x) > g(x')]\right)\right],
\tag{4}
\]
where $H(p) = -p\log p - (1-p)\log(1-p)$ is binary entropy.
\end{definition}

When entropy is too low, this can be seen as a trivial setting under a Bregman divergence criterion, empirically leading to collapse as the optimisation task is too easy. When elements are more difficult to distinguish as entropy increases, we look to understand how the domain over which a ranking is calculated affects optimisation in settings where entropy is high. To measure the contribution of this entropy to the optimisation of a student, we define the misordering probability of model $g$ via application of Pinsker's inequality:

\begin{definition}[Misordering-Probability under Entropy (Proof in Appendix \ref{app:pinsker-entropy})]
For teacher $g$ with pair-wise entropy $H(g)$, using surrogate conversion following \citet{painsky:2020} the misordering probability under $g$ is:
\[
\eta \bigl(H(g)\bigr)
  =\frac12-\sqrt{\frac{\log 2-H(g)}{2}},
  \tag{5}
\]
\end{definition}
\subsection{Primary Contribution}
\label{subsec:main}

We look to understand the contribution of these data selection factors to model optimisation in tandem. Thus, we establish a generalization bound for the special case of ranking distillation. We apply a PAC bound; these bounds affect a model's preference for a particular hypothesis within a hypothesis class, effectively governing how a model will generalise.

\begin{theorem}[Ranking Distillation Generalisation Bound (Proof in Appendix \ref{app:risk})]
\label{thm:main}
Let $(\mathcal X,d, \mu_Q)$ be a metric-measure space for $Q$ and let $\mathcal H$ be a hypothesis class of VC dimension $d$ such that every $h\in\mathcal H$ is $L$-Lipschitz. Let $f^{\star}=\arg\min_{h\in\mathcal H}\mathcal R(h)$ and let $\widehat f$ minimise an empirical semi-supervised Bregman loss with potential $\phi$ satisfying $|\phi'|\le1$. Then for every confidence level $\delta\in(0,1)$, with probability at least $1-\delta$,
\[
  \mathcal R(\widehat f)-\mathcal R(f^{\star})
  \;\le\;
  \zeta\,L\,\Delta_Q\,
        \eta \bigl(H(g)\bigr)
  \;+\;
  C\sqrt{\frac{d\log(1/\delta)}{n}},
  \tag{6}
\]
where $\zeta$ depends only on divergence potential~$\phi$, and $C>0$ is an absolute universal constant and $n$ is the number of observed instances.
\end{theorem}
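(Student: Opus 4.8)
The plan is to split the excess risk $\mathcal R(\widehat f)-\mathcal R(f^\star)$ into an \emph{estimation} term, controlled by uniform convergence of the empirical Bregman loss over $\mathcal H$, and an irreducible \emph{approximation (bias)} term, controlled by the teacher's own pairwise uncertainty together with the Lipschitz geometry of the score space. Writing $\mathcal L(f)=\mathbb E_{(x,x')\sim\mu_Q^{\otimes2}}[\ell(f,g;x,x')]$ for the population Bregman risk and $\widehat{\mathcal L}$ for its empirical version, the argument reduces to (a) a surrogate calibration inequality relating the $0$–$1$ misordering risk $\mathcal R$ to $\mathcal L$, and (b) a concentration bound for $\widehat{\mathcal L}$ around $\mathcal L$.

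For the calibration step I would use convexity of $\phi$ together with $|\phi'|\le1$ to show that whenever $f$ disagrees in sign with the target ordering on a pair, the Bregman divergence of the score differences exceeds a positive margin governed by the admissible score range. Since each $h\in\mathcal H$ is $L$-Lipschitz and $d(x,x')\le\Delta_Q$ for $\mu_Q\otimes\mu_Q$-almost every pair by the definition of the essential diameter in Eq.~(1), every score difference $f(x)-f(x')$ lies in $[-L\Delta_Q,\,L\Delta_Q]$. This bounded range lets me convert excess Bregman risk into excess $0$–$1$ risk with a multiplicative factor that collapses into a single constant $\zeta$ depending only on the modulus of convexity of $\phi$, yielding a comparison of the form $\mathcal R(f)-\mathcal R(f^\star)\lesssim \zeta\,L\,\Delta_Q\,[\mathcal L(f)-\inf_h\mathcal L(h)]$ plus a teacher-induced remainder.

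For the estimation term I would bound $\sup_{h\in\mathcal H}|\widehat{\mathcal L}(h)-\mathcal L(h)|$ by standard uniform-convergence machinery. Because the loss is a pairwise (U-process) quantity, I would symmetrise and reduce to a Rademacher or VC bound on the comparison class induced by $\mathcal H$, whose VC dimension is of order $d$; boundedness of the loss, supplied by $|\phi'|\le1$ and the range $L\Delta_Q$, furnishes the bounded-differences condition for a McDiarmid step and produces the $C\sqrt{d\log(1/\delta)/n}$ term with probability at least $1-\delta$. Combined with empirical optimality $\widehat{\mathcal L}(\widehat f)\le\widehat{\mathcal L}(f^\star)$, this controls the excess Bregman risk of $\widehat f$ by the estimation term alone.

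The bias enters through the teacher: even the population surrogate minimiser cannot recover the correct order on pairs where $g$ itself misorders, a fraction quantified by the misordering probability $\eta(H(g))$ of Eq.~(5), obtained from the pairwise entropy via Pinsker's inequality. Threading this teacher-error rate through the calibration inequality—where each misordered pair contributes at most the maximal surrogate cost $\zeta L\Delta_Q$—gives the bias term $\zeta\,L\,\Delta_Q\,\eta(H(g))$, and adding the estimation contribution yields Eq.~(6). I expect the calibration step to be the main obstacle: making the $\phi$-dependence collapse cleanly into a single $\zeta$ while correctly propagating the $L\Delta_Q$ range through the Bregman geometry, and simultaneously verifying that the pairwise U-statistic structure does not inflate the complexity beyond order $d$.
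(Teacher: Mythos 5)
Your proposal is correct at the same level of rigor as the paper's own argument and shares its skeleton---a Bregman-to-$0/1$ calibration with a $\phi$-dependent constant $\zeta$, the Lipschitz--diameter bound $|h(x)-h(x')|\le L\Delta_Q$, the Pinsker-derived misordering rate $\eta(H(g))$, and a VC uniform-deviation plus ERM step---but it differs in two structural choices worth recording. First, the paper pivots its decomposition on the teacher, writing $\mathcal R(\widehat f)-\mathcal R(f^\star)=[\mathcal R(\widehat f)-\mathcal R(g)]+[\mathcal R(g)-\mathcal R(f^\star)]$, calibrating the first bracket by $\zeta\,\mathbb E[Z_{\widehat f}]$ and bounding the second by $L\Delta_Q\,\eta(H(g))$; because the residual $\mathbb E[Z_{f^\star}]\le L\Delta_Q$ survives this route, the paper ends with \emph{three} terms (a locality term $\zeta L\Delta_Q$, an entropy term $L\Delta_Q\,\eta(H(g))$, and the statistical term) and must invoke a non-degeneracy assumption $\eta(H(g))\ge\varepsilon>0$ to merge the first two into the form of Eq.~(6). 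Your route---comparing $\widehat f$ to the surrogate minimiser and charging each teacher-misordered pair the maximal surrogate cost $\zeta L\Delta_Q$---produces the single bias term $\zeta L\Delta_Q\,\eta(H(g))$ directly, which is arguably cleaner because it makes that hidden merging step unnecessary. Second, you flag the pairwise loss as a U-process requiring symmetrisation; the paper sidesteps this by \emph{assuming} the $n$ pairs are drawn i.i.d.\ from $\mu_Q^{\otimes2}$ and then applying Hoeffding, Sauer's lemma, and a union bound, so under the theorem's stated hypotheses your extra machinery is unneeded, though it is the more careful treatment when pairs share endpoints. One caveat you inherit from the paper rather than introduce: the Pinsker argument in Appendix~\ref{app:misorder} establishes $\eta(H(g))$ as a \emph{lower} bound on the teacher's misordering probability, so deploying it as an upper-bound surrogate for the teacher-induced bias (as both your sketch and the paper's Step~1 do) is a direction-of-inequality looseness common to both arguments, not a defect specific to yours.
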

\label{subsec:implications}

This bound indicates that teacher selection should achieve a sufficient level of entropy which should be tightened when locality is loose. Further, incorporating unlabeled data through a semi-supervised approach can improve performance by yielding more accurate estimates of the class diameter, and although the choice of Bregman loss affects the constant $\zeta$ in the bound, it does not alter the fundamental scaling behavior. The metric structure manifests through the essential diameter $\Delta_Q$, which captures how ``spread out'' the relevant items are around each query. This geometric constraint affects generalization in biased sampling scenarios such as negative mining. This means that a biased sampler, where pairs cannot be drawn under the true measure $\mu_Q$, should be addressed.

\subsection{Effect of Biased Sampling: Density-Ratio Adjustment}
\label{sec:bias}

In practice, we often use biased sampling strategies to select training examples. Let a deterministic sampler pick a subset $\mathcal{X}'_Q \subset \mathcal{X}_Q$ and write $\nu_Q$ for the induced measure. We can analyze how this biased sampling affects our bounds following \citet{hsu:2021}, establishing the density ratio $\kappa_Q,$ bridging to our original measure via:
\[
  \kappa_Q
  = \sup_{x\in\operatorname{supp}\nu_Q}
      \frac{d\mu_Q}{d\nu_Q}(x)
  < \infty .
  \tag{7}
\]

\begin{corollary}[Fixed-miner density–ratio bound (Proof in Appendix \ref{app:biased})] Under a biased sampling measure with density ratio $\kappa_Q$ with respect to $\mu_Q$, excess risk is expressed through Theorem 1 by reweighting for our original measure $\mu_Q$:
\[
  \mathcal R_{\mu_Q}(\widehat f)-\mathcal R_{\mu_Q}(f^{\star})
  \le
  \zeta\,L\Delta_Q\,\eta\bigl(H(g)\bigr)
  + C\sqrt{\frac{\kappa_Q\,d\,\log(1/\delta)}{n}} .
\tag{8}
\]
\end{corollary}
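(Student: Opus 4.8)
The plan is to treat the two summands of Theorem~\ref{thm:main} separately and to observe that only the estimation term is sensitive to the sampling law. The bias term $\zeta L\Delta_Q\,\eta(H(g))$ is an intrinsic quantity: it is assembled from the Lipschitz constant, the essential diameter $\Delta_Q$ of $(\mathcal X,\mu_Q,d)$, and the teacher's pairwise entropy, none of which refer to how training pairs are drawn. I would therefore argue that it transfers verbatim under any sampler, and concentrate the work entirely on re-deriving the concentration term when pairs arrive from $\nu_Q$ rather than from $\mu_Q$.

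\textbf{Change of measure.} Following \citet{hsu:2021}, I would rewrite the target risk of any $h\in\mathcal H$ as an expectation under the sampling law through the Radon--Nikodym weight $w=\tfrac{d\mu_Q}{d\nu_Q}$, which is finite $\nu_Q$-almost everywhere and bounded pointwise by $\kappa_Q$ by the definition of the density ratio. The biased-sample estimator $\widehat f$ minimises the reweighted empirical loss $\tfrac1n\sum_i w(x_i)\,\ell(h;x_i)$, an unbiased estimator of $\mathcal R_{\mu_Q}(h)$ since $\E_{\nu_Q}[w\,\ell]=\E_{\mu_Q}[\ell]$. The standard ERM comparison then bounds the excess $\mu_Q$-risk $\mathcal R_{\mu_Q}(\widehat f)-\mathcal R_{\mu_Q}(f^{\star})$ by twice the uniform deviation of this reweighted process over $\mathcal H$, plus the unchanged bias term.

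\textbf{Controlling the inflated deviation.} The crux is to bound that uniform deviation. The key inequality is the second-moment control $\E_{\nu_Q}[(w\,\ell)^2]\le\kappa_Q\,\E_{\nu_Q}[w\,\ell^2]=\kappa_Q\,\E_{\mu_Q}[\ell^2]$, obtained from $w^2\le\kappa_Q w$. This shows the variance of each reweighted summand inflates by at most $\kappa_Q$, i.e.\ the effective sample size degrades from $n$ to $n/\kappa_Q$. Re-running the VC/Bernstein argument behind Theorem~\ref{thm:main} with this inflated variance replaces $n$ by $n/\kappa_Q$ in the estimation term, producing $C\sqrt{\kappa_Q\,d\log(1/\delta)/n}$ while leaving the bias term intact, which is exactly the claimed bound.

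\textbf{Main obstacle.} The delicate point is securing the square-root dependence $\sqrt{\kappa_Q}$ rather than a linear $\kappa_Q$. A crude bounded-weights argument (Hoeffding, or a global Rademacher bound) would scale the entire deviation by the worst-case weight $\kappa_Q$, since the reweighted losses range up to $\kappa_Q$ times their original magnitude; only a variance-based (Bernstein or local-Rademacher) refinement, combined with the identity $w^2\le\kappa_Q w$, converts this worst-case factor into the second-moment factor that yields $\sqrt{\kappa_Q}$. A secondary check is compatibility with the pairwise structure: the density ratio on pairs factorises, so I would verify that only the sampler-controlled document coordinate carries a weight, the anchor query being held fixed in the risk, which is what keeps a single power of $\kappa_Q$ rather than $\kappa_Q^2$.
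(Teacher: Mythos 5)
Your proposal follows the paper's proof in Appendix~\ref{app:biased} essentially step for step: you keep the bias term $\zeta L\Delta_Q\,\eta(H(g))$ unchanged, perform the change of measure via $w_Q=\tfrac{d\mu_Q}{d\nu_Q}\le\kappa_Q$ following \citet{hsu:2021} exactly as in the paper's Step~1 (including the subtlety that only one coordinate of the pair carries a weight, which is how the paper's risk identity $\mathcal R_{\mu_Q}(h)=\mathbb E_{\nu_Q}[w_Q(x)\,\ell_h(x,x')]$ avoids a $\kappa_Q^2$ factor), and close with the same ERM comparison and uniform deviation argument as its Steps~3--5. If anything you are more careful than the paper at the crux: its Step~3 records only the range bound $|w_Q Z_h|\le\kappa_Q L\Delta_Q$ and then asserts a deviation of order $\sqrt{\kappa_Q\,d\log(1/\delta)/n}$, whereas your second-moment argument ($w_Q^2\le\kappa_Q w_Q$, hence variance inflation by $\kappa_Q$ and a Bernstein-type refinement) supplies precisely the justification needed to obtain $\sqrt{\kappa_Q}$ rather than the linear $\kappa_Q$ that a Hoeffding-style bound on the inflated range would give.
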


\noindent
When the pool is \emph{in-domain} ($\kappa_Q\approx1$) the term is
unchanged; off-domain pools inflate the bound.
Empirical values of $H_{\nu}(g)$, $\Delta_Q$, and $\kappa_Q$ are reported in Table \ref{tab:geometry}.

\section{Empirical Investigation}
Having established theoretical factors in ranking distillation, we now investigate common settings from literature.\footnote{Repository: \href{https://github.com/Parry-Parry/locality-and-entropy}{https://github.com/Parry-Parry/locality-and-entropy}}

\subsection{Experimental Setup}

\noindent \textbf{Datasets and Metrics} We assess in-domain effectiveness on the TREC Deep Learning 2019~\citep{craswell:2019} and 2020~\citep{craswell:2020} test collections, retrieving from the MSMARCO passage collection~\citep{nguyen:2016}. To assess out-of-domain effectiveness, we use all public test collections comprising the BEIR benchmark~\citep{thakur:2021}. We report dataset statistics and full out-of-domain results in Appendix \ref{app:ood}. All reported test collections apply normalized discounted cumulative gain (nDCG)~\citep{DBLP:journals/tois/JarvelinK02} as their primary measure and we report this value at a rank cutoff of 10 as is standard, to provide insight at greater recall values we also report mean averaged precision (MAP) over full rankings. We use a TOST (Two One-Sided T-test) to assess statistical equivalence with $\alpha=0.05$ and means bounded at $5\%$. Further details are provided in Appendix \ref{app:details}.

\noindent \textbf{Models} We train both cross-encoders and bi-encoders to assess how representation capacity affects investigated training settings. All cross-encoders are initialised from an ELECTRA checkpoint~\citep{clark:2020}, as this model has been shown empirically to yield greater effectiveness than a standard BERT model for cross-encoder initialisation~\citep{pradeep:2022}. We initialise all bi-encoders from the original BERT checkpoint~\citep{devlin:2019}. Each model is a standard transformer encoder with twelve layers and six attention heads per layer. 

To ensure reproducibility and clear attribution of effectiveness, we train a cross-encoder following \citep{pradeep:2022} using the ELECTRA architecture trained for one epoch with BM25 localised negatives on the MSMARCO-passage training set. Under distillation, it is generally unnecessary~\citep{althammer:2023, schlatt:2025} and computationally infeasible in our study to train all model variants at this scale; thus, this model acts as a teacher trained in a data-rich environment.

\noindent \textbf{Loss Criteria}
As a supervised criterion, we employ localized contrastive estimation (LCE)~\citep{gao:2021}, similar to infoNCE~\citep{gutmann:2010} and more generally model-conditioned NCE~\citep{ceylan:2018}. This criterion, assuming $\mathcal{X}_i$ is a known positive, is defined as:
{\small{\[
\ell_{\mathrm{LCE}}(\mathcal{X};\,f)
=
-\,\log
\frac{
  \exp \bigl(f(\mathcal{X}_i)/\tau\bigr)
}{
  \exp \bigl(\mathcal{X}_i/\tau\bigr)
  +
  \sum_{j=1, j \neq i}^{m-1} \exp \bigl(\mathcal{X}_j/\tau\bigr)
}.
\tag{9}
\]}}

We employ three semi-supervised criteria prevalent in neural ranking literature. The first marginMSE aims to reduce the effect of differences in ranking modes by optimising the margin between positive ($\mathcal{X}_i$) and negative ($\mathcal{X}_j$) elements instead of pointwise scores~\citep{hofstätter:2021}.
{\small{\[
\ell_{\mathrm{marginMSE}}(\mathcal{X};f,g)
\;=\;
\sum_{\substack{j=1\\j\neq i}}^{m}
\Bigl[
  f(\mathcal{X}_i) - f(\mathcal{X}_j)\bigr)
  -
  g(\mathcal{X}_i) + g(\mathcal{X}_j)\bigr)
\Bigr]^2
\tag{10}
\]}}

The second, RankNet, is common in learning-to-rank literature~\citep{burges:2010}. It sees increasing application with increasing precision of modern ranking models as it optimizes all $x, x'$ interactions agnostic of human labels.

{\small{\[
  \mathcal \ell_{\text{RankNet}}(f) = \sum_{i=1}^{|\mathcal{X}|} \sum_{\substack{j=1\\j\neq i}}^{|\mathcal{X}|} -\Bigl[y_{ij}\,\log \sigma(s_{ij}) + (1-y_{ij})\,\log\bigl(1-\sigma(s_{ij})\bigr)\Bigr], \tag{11}
\]}}

where \( s_{ij}=f(\mathcal{X}_i)-f(\mathcal{X}_j) \), \( \sigma(z)=\frac{1}{1+e^{-z}} \), and \( y_{ij} \in \{0,1\} \). 

Finally, KL divergence is commonly used as a loss criterion in several settings beyond ranking~\citep{lin:2020, kingma:2022}. It assumes, much like RankNet, that a reference distribution, in this case $[g(\mathcal{X}_i)]_1^m$, represents the absolute ground truth.

{\small{\[
\ell_{\mathrm{KL}}(\mathcal{X};f,g)
\;=\;
\sum_{j=1}^m
f(\mathcal{X})_j
\;\log\!\frac{f(\mathcal{X})_j}{g(\mathcal{X})_j}
\tag{12}
\]}}

We show these semi-supervised criteria can be expressed as Bregman divergences in Appendix \ref{app:bregman}.

\noindent \textbf{Implementation Details} All models observe a total of $12M$ documents from the MSMARCO-passage collection\citep{nguyen:2016}, providing approximately equal computational budget across all training settings; this is equivalent to observing $93k$ topics with $16$ documents each. We follow \citep{pradeep:2022} in setting the learning rate of cross-encoder runs to $1e{-5}$ and \citep{hofstätter:2021} for bi-encoders, setting the learning rate to $7e{-6}$. We apply a learning rate warmup for 0.1 epochs and then linear decay with an AdamW optimizer using default hyperparameter settings~\citep{loshchilov:2019}. We use the Transformers library~\citep{wolf:2020} with a PyTorch backend~\citep{paszke:2019} for all training processes. We use PyTerrier for inference and evaluation~\citep{pyterrier:2020}. 

\noindent \textbf{Negative Sampling Sources} \looseness=-1 We consider four sampling sources in our investigation, aligning with those applied in literature. The first is uniform selection from the training corpus (Random). The second is a lexical heuristic BM25 ($k_1=1.2$, $b=0.75$)~\citep{robertson:1995}, a lightweight retrieval model. The third is our teacher model, a cross-encoder (CE). Finally, we apply the ensemble pipeline shown in Figure \ref{fig:sampling} (Ensemble), we use the rankings supplied but filter by our teacher model to ensure fairness across settings and outline all models contained within this ensemble in Appendix \ref{app:details}. 

\subsection{Discussion}

\begin{table}[t]
  \centering
  \footnotesize
  \caption{\small Ranking effectiveness across loss criteria and sampling domains. In-domain effectiveness is evaluated on TREC Deep Learning test collections, and out-of-domain effectiveness is evaluated on the BEIR benchmark (per-dataset effectiveness is shown in Appendix \ref{app:ood}). Superscripts denote statistical equivalence via TOST (1\% bound, $\alpha=0.05$).}
  \setlength{\tabcolsep}{2pt}
    \begin{adjustbox}{width=0.8\columnwidth}
  
  \begin{tabular}{llcccccccccc}
  \toprule
     & & \multicolumn{4}{c}{TREC DL'19} & \multicolumn{4}{c}{TREC DL'20} & \multicolumn{2}{c}{BEIR} \\
  \cmidrule(lr){3-6}\cmidrule(lr){7-10}\cmidrule(lr){11-12}
  & & \multicolumn{2}{c}{BE} & \multicolumn{2}{c}{CE} & \multicolumn{2}{c}{BE} & \multicolumn{2}{c}{CE} & BE & CE \\
  \cmidrule(lr){3-4}
  \cmidrule(lr){5-6}
  \cmidrule(lr){7-8}
  \cmidrule(lr){9-10}
  \cmidrule(lr){11-11}
  \cmidrule(lr){12-12}
   Loss & Domain & nDCG & MAP & nDCG & MAP & nDCG & MAP & nDCG & MAP & nDCG & nDCG \\
  \midrule
  LCE & Random & 0.546\textsuperscript{\phantom{aaa}} & 0.333\textsuperscript{\phantom{aaa}} & 0.628\textsuperscript{bcd\phantom{}} & 0.407\textsuperscript{bcd\phantom{}} & 0.523\textsuperscript{\phantom{aaa}} & 0.360\textsuperscript{\phantom{aaa}} & 0.615\textsuperscript{\phantom{aaa}} & 0.406\textsuperscript{\phantom{aaa}} & 0.459\textsuperscript{bc\phantom{a}} & 0.507\textsuperscript{d\phantom{aa}} \\
  LCE & BM25 & 0.642\textsuperscript{cd\phantom{a}} & 0.385\textsuperscript{cd\phantom{a}} & 0.681\textsuperscript{acd\phantom{}} & 0.459\textsuperscript{acd\phantom{}} & 0.622\textsuperscript{cd\phantom{a}} & 0.425\textsuperscript{cd\phantom{a}} & 0.686\textsuperscript{cd\phantom{a}} & 0.488\textsuperscript{cd\phantom{a}} & 0.462\textsuperscript{ac\phantom{a}} & 0.472\textsuperscript{\phantom{aaa}} \\
  LCE & CE & 0.634\textsuperscript{bd\phantom{a}} & 0.389\textsuperscript{bd\phantom{a}} & 0.675\textsuperscript{abd\phantom{}} & 0.454\textsuperscript{abd\phantom{}} & 0.632\textsuperscript{bd\phantom{a}} & 0.424\textsuperscript{bd\phantom{a}} & 0.689\textsuperscript{bd\phantom{a}} & 0.482\textsuperscript{bd\phantom{a}} & 0.456\textsuperscript{ab\phantom{a}} & 0.478\textsuperscript{\phantom{aaa}} \\
  LCE & Ensemble & 0.658\textsuperscript{bc\phantom{a}} & 0.397\textsuperscript{bc\phantom{a}} & 0.730\textsuperscript{abc\phantom{}} & 0.501\textsuperscript{abc\phantom{}} & 0.655\textsuperscript{bc\phantom{a}} & 0.448\textsuperscript{bc\phantom{a}} & 0.738\textsuperscript{bc\phantom{a}} & 0.520\textsuperscript{bc\phantom{a}} & 0.459\textsuperscript{\phantom{aaa}} & 0.502\textsuperscript{a\phantom{aa}} \\
  \midrule
  RankNet & Random & 0.336\textsuperscript{\phantom{aaa}} & 0.199\textsuperscript{\phantom{aaa}} & 0.616\textsuperscript{\phantom{aaa}} & 0.409\textsuperscript{\phantom{aaa}} & 0.273\textsuperscript{\phantom{aaa}} & 0.185\textsuperscript{\phantom{aaa}} & 0.578\textsuperscript{\phantom{aaa}} & 0.404\textsuperscript{\phantom{aaa}} & 0.323\textsuperscript{\phantom{aaa}} & 0.445\textsuperscript{\phantom{aaa}} \\
  RankNet & BM25 & 0.653\textsuperscript{cd\phantom{a}} & 0.410\textsuperscript{cd\phantom{a}} & 0.731\textsuperscript{cd\phantom{a}} & 0.485\textsuperscript{cd\phantom{a}} & 0.659\textsuperscript{cd\phantom{a}} & 0.424\textsuperscript{cd\phantom{a}} & 0.739\textsuperscript{cd\phantom{a}} & 0.498\textsuperscript{cd\phantom{a}} & 0.468\textsuperscript{\phantom{aaa}} & 0.527\textsuperscript{c\phantom{aa}} \\
  RankNet & CE & 0.657\textsuperscript{bd\phantom{a}} & 0.409\textsuperscript{bd\phantom{a}} & 0.721\textsuperscript{bd\phantom{a}} & 0.489\textsuperscript{bd\phantom{a}} & 0.651\textsuperscript{bd\phantom{a}} & 0.432\textsuperscript{bd\phantom{a}} & 0.734\textsuperscript{bd\phantom{a}} & 0.496\textsuperscript{bd\phantom{a}} & 0.475\textsuperscript{\phantom{aaa}} & 0.526\textsuperscript{b\phantom{aa}} \\
  RankNet & Ensemble & 0.679\textsuperscript{bc\phantom{a}} & 0.413\textsuperscript{bc\phantom{a}} & 0.719\textsuperscript{bc\phantom{a}} & 0.483\textsuperscript{bc\phantom{a}} & 0.689\textsuperscript{bc\phantom{a}} & 0.459\textsuperscript{bc\phantom{a}} & 0.747\textsuperscript{bc\phantom{a}} & 0.508\textsuperscript{bc\phantom{a}} & 0.494\textsuperscript{\phantom{aaa}} & 0.488\textsuperscript{\phantom{aaa}} \\
  \midrule
  mMSE & Random & 0.602\textsuperscript{bcd\phantom{}} & 0.374\textsuperscript{bcd\phantom{}} & 0.693\textsuperscript{bcd\phantom{}} & 0.433\textsuperscript{bcd\phantom{}} & 0.637\textsuperscript{bcd\phantom{}} & 0.415\textsuperscript{bcd\phantom{}} & 0.685\textsuperscript{bcd\phantom{}} & 0.470\textsuperscript{bcd\phantom{}} & 0.459\textsuperscript{\phantom{aaa}} & 0.477\textsuperscript{\phantom{aaa}} \\
  mMSE & BM25 & 0.662\textsuperscript{acd\phantom{}} & 0.411\textsuperscript{acd\phantom{}} & 0.601\textsuperscript{acd\phantom{}} & 0.390\textsuperscript{acd\phantom{}} & 0.666\textsuperscript{acd\phantom{}} & 0.450\textsuperscript{acd\phantom{}} & 0.607\textsuperscript{acd\phantom{}} & 0.400\textsuperscript{acd\phantom{}} & 0.467\textsuperscript{\phantom{aaa}} & 0.520\textsuperscript{cd\phantom{a}} \\
  mMSE & CE & 0.676\textsuperscript{abd\phantom{}} & 0.422\textsuperscript{abd\phantom{}} & 0.724\textsuperscript{abd\phantom{}} & 0.484\textsuperscript{abd\phantom{}} & 0.668\textsuperscript{abd\phantom{}} & 0.448\textsuperscript{abd\phantom{}} & 0.737\textsuperscript{abd\phantom{}} & 0.511\textsuperscript{abd\phantom{}} & 0.473\textsuperscript{\phantom{aaa}} & 0.523\textsuperscript{bd\phantom{a}} \\
  mMSE & Ensemble & 0.683\textsuperscript{abc\phantom{}} & 0.414\textsuperscript{abc\phantom{}} & 0.717\textsuperscript{abc\phantom{}} & 0.482\textsuperscript{abc\phantom{}} & 0.661\textsuperscript{abc\phantom{}} & 0.458\textsuperscript{abc\phantom{}} & 0.736\textsuperscript{abc\phantom{}} & 0.516\textsuperscript{abc\phantom{}} & 0.492\textsuperscript{\phantom{aaa}} & 0.523\textsuperscript{bc\phantom{a}} \\
  \midrule
  KL & Random & 0.571\textsuperscript{\phantom{aaa}} & 0.361\textsuperscript{\phantom{aaa}} & 0.661\textsuperscript{bcd\phantom{}} & 0.428\textsuperscript{bcd\phantom{}} & 0.529\textsuperscript{\phantom{aaa}} & 0.356\textsuperscript{\phantom{aaa}} & 0.625\textsuperscript{\phantom{aaa}} & 0.416\textsuperscript{\phantom{aaa}} & 0.447\textsuperscript{\phantom{aaa}} & 0.511\textsuperscript{c\phantom{aa}} \\
  KL & BM25 & 0.655\textsuperscript{cd\phantom{a}} & 0.401\textsuperscript{cd\phantom{a}} & 0.698\textsuperscript{acd\phantom{}} & 0.471\textsuperscript{acd\phantom{}} & 0.637\textsuperscript{cd\phantom{a}} & 0.430\textsuperscript{cd\phantom{a}} & 0.726\textsuperscript{cd\phantom{a}} & 0.508\textsuperscript{cd\phantom{a}} & 0.466\textsuperscript{c\phantom{aa}} & 0.504\textsuperscript{d\phantom{aa}} \\
  KL & CE & 0.660\textsuperscript{bd\phantom{a}} & 0.401\textsuperscript{bd\phantom{a}} & 0.712\textsuperscript{abd\phantom{}} & 0.477\textsuperscript{abd\phantom{}} & 0.633\textsuperscript{bd\phantom{a}} & 0.434\textsuperscript{bd\phantom{a}} & 0.728\textsuperscript{bd\phantom{a}} & 0.509\textsuperscript{bd\phantom{a}} & 0.467\textsuperscript{b\phantom{aa}} & 0.513\textsuperscript{a\phantom{aa}} \\
  KL & Ensemble & 0.660\textsuperscript{bc\phantom{a}} & 0.402\textsuperscript{bc\phantom{a}} & 0.727\textsuperscript{abc\phantom{}} & 0.494\textsuperscript{abc\phantom{}} & 0.670\textsuperscript{bc\phantom{a}} & 0.455\textsuperscript{bc\phantom{a}} & 0.733\textsuperscript{bc\phantom{a}} & 0.519\textsuperscript{bc\phantom{a}} & 0.485\textsuperscript{\phantom{aaa}} & 0.463\textsuperscript{b\phantom{aa}} \\
  \bottomrule
  \end{tabular}
  \end{adjustbox}
  \label{tab:main}
\end{table}

\textbf{Effectiveness under Different Empirical Distributions}
Our ablation of sampling distributions under semi-supervision shows multiple cases where investing computational budget in a strong estimator is unnecessary to improve generalisation both in- and out-of-domain. In Table \ref{tab:main}, rows 1-4 show effectiveness in a supervised setting using a contrastive objective. In-domain, it is clear that localised sampling by heuristics can be effective as there is a clear trend in effectiveness as ``hardness'' and thus computation is expended. Out-of-domain, observe that generally, where a single estimator-induced distribution is insufficient to cover the query manifold, a random or ensemble sample yields greater robustness. We find that, though in explicit contrastive learning, there is a clear trend that the tighter a sampling distribution is, the more model effectiveness can continue to improve; we find that under all semi-supervision settings, effectiveness plateaus once a minimal locality is enforced (BM25) with inconsistent effectiveness improvements suggesting that heuristics such as mining from rankings are insufficient to explain effectiveness gains. Though bias induced by sampling is indeed reduced by ensembling approaches, empirical values of the query-specific diameter show that the query space does not become more compact. This is shown in Table~\ref{tab:geometry}, explaining minimal change in out-of-domain effectiveness as our bias term is largely unchanged across domains. However, aligning with corollary 3.1.1, we see that when density ratios are minimised across our settings under an ensembling approach, a Bi-Encoder continues to improve, suggesting this term can have a larger effect; nevertheless, we do not find settings where a statistically differentiable positive effect is found across multiple domains (e.g helping both in and out-of-domain) when applying computationally expense data selection strategies. 

\begin{table}[t]
  \centering
  \footnotesize
  \setlength{\tabcolsep}{2pt}
  \caption{\small Empirical Values of teacher entropy, the relative density ratio of each sampling domain and empirical measures of the query diameter. Each is taken at the $95^\text{th}$ percentile (either max or mean) for robust estimates. Entropy is measured in Nats via Shannon entropy over each ranking. We note that these are approximations of the theoretical aspects discussed in this work.}
  \renewcommand{\arraystretch}{1.2}
  \begin{tabular}{@{}lccc@{}}
  \toprule
  Source & $\widehat H_{\nu}(g)$ & $\widehat \kappa_Q$ & $\widehat\Delta_Q$\\
  \midrule
  Random & 6.62$\pm 0.127$ & 14.202$\pm 556.251$ & 10.448$\pm 0.044$ \\
  BM25 & 4.973$\pm 1.978$ & 12.747$\pm 461.588$ & 9.862$\pm 0.205$ \\
  Cross-Encoder & 4.068$\pm 0.930$ & 11.116$\pm 353.018$ & 9.593$\pm 0.251$ \\
  Ensemble & 3.973$\pm 0.838$ &  8.276$\pm 165.234$ & 9.546$\pm 0.233$ \\
  \bottomrule
  \end{tabular}
\label{tab:geometry}
\end{table}

\textbf{Effectiveness under Different Entropy}
Having controlled for entropy in different training settings, we now fix the sampling domain $\nu_Q$, choosing BM25 and select rankings based on where they lie within the sampled teacher entropy distribution by quartile. In Table \ref{tab:teacher-entropy}, we see that once locality is established, one can further improve effectiveness in-domain by selecting examples conditioned on ranking entropy. Even under a constrastive setting we find that sub-selection by some teacher (significantly less expensive then ensembling before similarly filtering) can further improve performance reducing the gap observed in our main results under a contrastive objective. We see that choosing the central mass of the entropy distribution (inner quartiles) is most effective and can be contrasted with selection in the outlier quartiles in which effectiveness degrades. Generally we observe that the upper quartile will yield greater effectiveness over the lower quartile, coupled with the reduced effectiveness of selection via outlier quartiles we consider that a balance must be struck between capturing high entropy examples for the purposes of generalisation. We note that all cases degrade out-of-domain potentially suggesting that choosing examples by these criteria induce overfitting to the particular cases within the training domain. We observe that increased mean entropy leads to reduced effectiveness, considering Table \ref{tab:geometry}, we can infer that in cases where entropy is high, the representation space is insufficiently tight to compensate for this entropy. Out-of-domain correlation is minimal as the density ratio $\kappa_Q$ will inflate the VC term, bounding generalisation under this setting, thus any attempt to improve locality will fail to improve effectiveness.

\begin{table}[t]
  \centering
  \footnotesize
  \setlength{\tabcolsep}{3pt}
  \caption{\small IR effectiveness across domain subsets by quartiles (Q) of the teacher entropy distribution over training examples. Effectiveness is measured and evaluated as noted in Table \ref{tab:main}. Shannon Entropy is denoted $\widehat H_\nu(g)$.}
\begin{adjustbox}{width=.6\textwidth}

  \begin{tabular}{@{}lllccccc@{}}
  \toprule
  & & & \multicolumn{2}{c}{TREC DL’19} & \multicolumn{2}{c}{TREC DL’20} & \multicolumn{1}{c}{BEIR} \\
  \cmidrule(lr){4-5}
  \cmidrule(lr){6-7}
  \cmidrule(lr){8-8}
    Loss & Transform & $\widehat H_\nu(g)$ & nDCG & MAP & nDCG & MAP & nDCG \\
  \midrule
  LCE & Original & 4.973$\pm 1.978$ & 0.681 & 0.459 & 0.686 & 0.488 & 0.472 \\
  LCE & Lower Q & 5.814$\pm1.483$ & 0.690 & 0.469 & 0.720 & 0.506 & 0.454 \\
  LCE & Inner Qs & 5.344$\pm 1.527$ & 0.723 & 0.491 & 0.742 & 0.520 & 0.465 \\
  LCE & Upper Q & 5.106$\pm 1.027$ & 0.716 & 0.482 & 0.739 & 0.518 & 0.460 \\
  LCE & Outlier Qs & 6.807$\pm 1.278$ & 0.638 & 0.412 & 0.636 & 0.425 & 0.357 \\
  \midrule
  mMSE & Original & 4.973$\pm 1.978$ & 0.601 & 0.390 & 0.607 & 0.400 & 0.520 \\
  mMSE & Lower Q & 5.814$\pm1.483$ & 0.720 & 0.485 & 0.729 & 0.511 & 0.484 \\
  mMSE & Inner Qs & 5.344$\pm 1.527$ & 0.727 & 0.492 & 0.729 & 0.505 & 0.490 \\
  mMSE & Upper Q & 5.106$\pm 1.027$ & 0.724 & 0.491 & 0.737 & 0.504 & 0.491 \\
  mMSE & Outlier Qs & 6.807$\pm 1.278$ & 0.712 & 0.473 & 0.732 & 0.501 & 0.469 \\
  \bottomrule
\end{tabular}
\label{tab:teacher-entropy}
\end{adjustbox}
\end{table}

\textbf{Differences in Model Behaviour}
Even under densely annotated ($>6$ relevant texts per query~\citep{craswell:2019, craswell:2020}) test collections, variations in in-domain effectiveness between operational settings are minimal; this is expected via our bound. However, when observing intrinsic model behaviour, we observe that the chosen domain can greatly affect score distributions under otherwise identical optimization settings. In Figure \ref{fig:fourplots}, observe how different settings align with a power law; this can be considered alignment with a Zipfian distribution. See how optimization criteria lead to a vastly different score distribution from the teacher in Figure \ref{fig:4a}, the original teacher has high confidence within the top-10 ranks, and scores reach an elbow point at rank 218. However, depending on the sampling domain, we observe collapse when applying a random distribution in Figure \ref{fig:4b} as both entropy and the relative density ratio, as outlined in Table \ref{tab:geometry}, are insufficiently tight, leading to collapse. Conversely, we observe power law behaviour when applying an ensemble with an elbow at rank 12. Given the weak performance of ensemble approaches out-of-domain when this behaviour is present, we posit that this highly confident behaviour may be overfitting to an in-domain setting and may not be desirable. Though we leave any causal analysis to future work, we observe this behavior in several settings as shown in Appendix \ref{app:figs}.

\begin{figure}[htb]
  \centering
  \begin{subfigure}[b]{0.32\columnwidth}
    \centering
    \includegraphics[width=\linewidth]{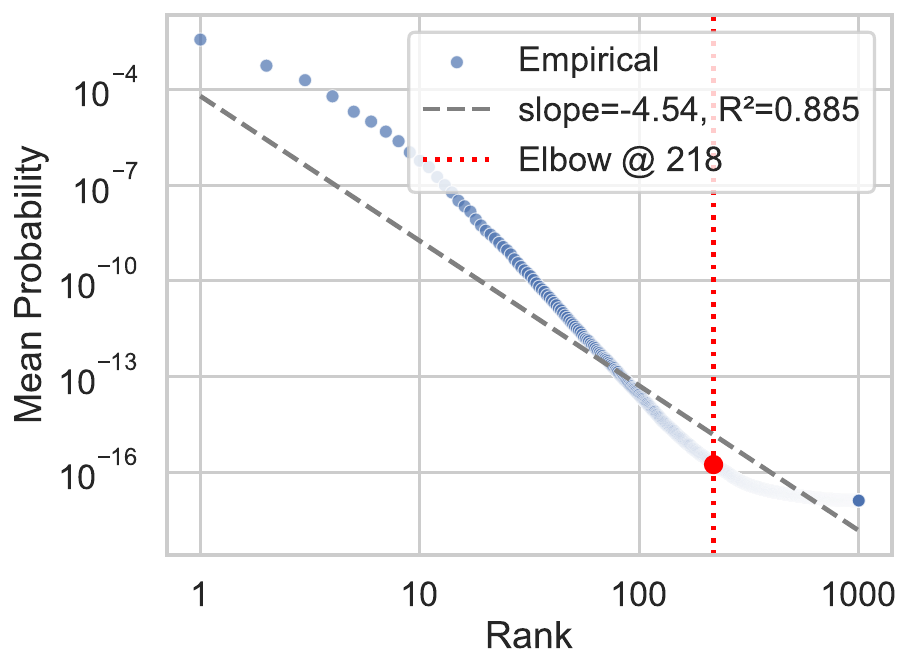}
    \caption{Teacher Model}\label{fig:4a}
  \end{subfigure}\hfill
  \begin{subfigure}[b]{0.32\columnwidth}
    \centering
    \includegraphics[width=\linewidth]{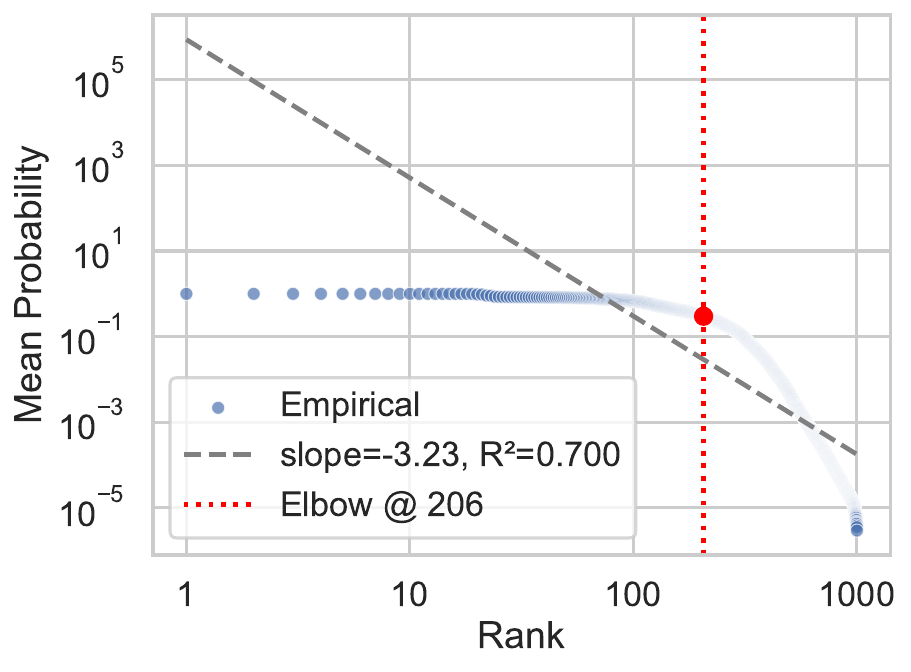}
    \caption{RankNet - Random}\label{fig:4b}
  \end{subfigure}\hfill
  \begin{subfigure}[b]{0.32\columnwidth}
    \centering
    \includegraphics[width=\linewidth]{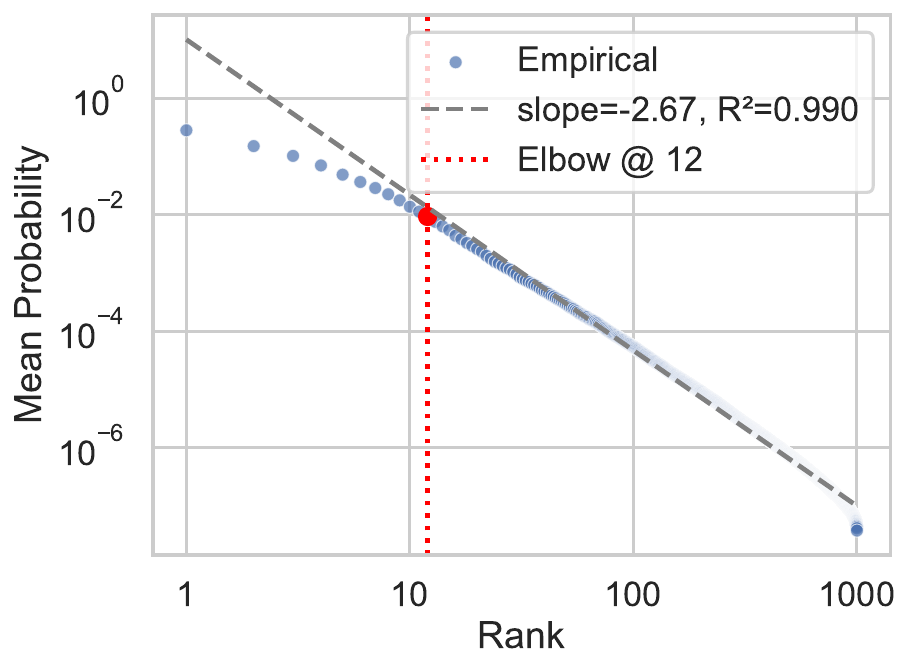}
    \caption{RankNet - Ensemble}\label{fig:4c}
  \end{subfigure}\hfill
  \caption{Score versus rank ratio comparing the teacher (LCE) and two RankNet-trained students with different sampling distributions evaluated on TREC DL`19. Note the log-log scale and alignment with power laws plotted in grey.}
  \label{fig:fourplots}
\end{figure}

\section{Conclusion}
In this work, we have provided a systematic analysis of two core components in modern neural ranking pipelines—example locality (negative sampling) and target entropy (distillation)—both theoretically and empirically. Our analysis establishes a novel generalisation bound for ranking distillation in terms of locality and entropy, accounting for biased sampling strategies. This bound shows that overly “hard” or overly uniform teacher distributions can both degrade student performance, and that geometry-driven sampling impacts only the bias term, not the entropy term. Empirically, across both in-domain (TREC Deep Learning 2019/2020) and out-of-domain (BEIR) benchmarks, we demonstrate that complex, multi-stage hard-negative pipelines yield minimal gains over simpler sampling strategies under distillation and theoretically justify cases where it is valuable to expend such computation. Furthermore, by stratifying examples according to teacher ranking entropy, we observe consistent in-domain improvements at intermediate entropy levels, while high-entropy “outlier” subsets degrade performance, confirming the bound’s prediction. Moving forward, our findings encourage a shift away from computationally intensive ensemble and cascade architectures toward more principled training protocols that directly control for geometry (locality) and information content (entropy). By focusing on the two orthogonal dimensions we have identified, tangible improvements can be made in ranking effectiveness without excessive computational expense.

\bibliography{bibliography}

\appendix
\appendix

\section{Preliminaries and Definitions}
\label{app:bregman}

For completeness, we collect all loss functions used in the main text and show that each can be expressed as a \emph{Bregman divergence}~\citep{bregman:1967}. We recall the definition first.

\begin{definition}[Bregman divergence]
Let $\phi:\mathbb R\to\mathbb R$ be strictly convex and $C^{1}$. The Bregman divergence between $a,b\in\mathbb R$ is
\[
  D_{\phi}(a\Vert b)=\phi(a)-\phi(b)-\phi'(b)(a-b).
\]
\end{definition}

The Kullback–Leibler divergence can be formed as a Bregman divergence as shown by \citet{banerjee:2005}.

\subsection{RankNet}
RankNet \citep{burges:2010} minimises the pair-wise logistic loss
\[
  \mathcal L_{\text{RankNet}}(f) = -\Bigl[y_{ij}\,\log \sigma(s_{ij}) + (1-y_{ij})\,\log\bigl(1-\sigma(s_{ij})\bigr)\Bigr],
\]
where \( s_{ij}=f(\mathcal{X}_i)-f(\mathcal{X}_j) \), \( \sigma(z)=\frac{1}{1+e^{-z}} \), and \( y_{ij} \in \{0,1\} \). 

When \( y_{ij} \in \{0,1\} \), the Bregman divergence simplifies to:
\[
  D_{\phi}\bigl(y_{ij}\Vert p_{ij}(f)\bigr) = \mathcal L_{\text{RankNet}}(f),
\]
Note that while \( \phi(u) \) is undefined at \( u = 0 \) and \( u = 1 \), the Bregman divergence \( D_\phi(y_{ij} \Vert p_{ij}) \) is well-defined for \( y_{ij} \in \{0,1\} \) and \( p_{ij} \in (0,1) \) by taking limits, and equals the RankNet loss.
\subsection{MarginMSE}
MarginMSE \citep{hofstätter:2021} minimises the error between teacher and student margins
\[
  m_{ij}(f)=f(\mathcal{X}_i)-f(\mathcal{X}_j), \qquad m_{ij}(g)=g(\mathcal{X}_i)-g(\mathcal{X}_j).
\]
With the quadratic potential \( \phi(u)=u^{2} \) one obtains \( D_{\phi}(a\Vert b)=(a-b)^{2} \) and hence
\[
  \mathcal L_{\text{MarginMSE}}(f)=D_{\phi}\bigl(m_{ij}(f)\Vert m_{ij}(g)\bigr).
\]

\subsection{Definitions}

\begin{definition}[Polish metric space]
A \emph{Polish metric space} is a complete, separable metric space.
\end{definition}

\begin{definition}[Lipschitz continuity~\citep{weaver:1999}]
A function $h: (\mathcal{X}, d) \to \mathbb{R}$ is \emph{$L$-Lipschitz} if there exists $L > 0$ such that
$$|h(x) - h(x')| \leq L \cdot d(x, x')$$
for all $x, x' \in \mathcal{X}$.
\end{definition}

\begin{definition}[Pair-wise Risk]
For a scorer $f$ and query measure $\mu_Q$, the \emph{pair-wise risk} is the probability of mis-ordering:
\[\mathcal{R}(f) = \Pr_{(x,x') \sim \mu_Q^{\otimes 2}}[f(x) < f(x')].
\tag{2}
\]
\end{definition}

\begin{definition}[VC dimension for function classes]
For a hypothesis class $\mathcal{H} \subseteq \mathbb{R}^{\mathcal{X}}$, we consider the induced thresholded classes $\{x \mapsto \mathbf{1}[h(x) > t] : h \in \mathcal{H}, t \in \mathbb{R}\}$. The \emph{VC dimension} of $\mathcal{H}$ is the VC dimension of this induced class of indicator functions.
\label{def:vc-dim}
\end{definition}
\begin{lemma}[Hoeffding's Inequality]\label{lem:hoeffding}
Let $Z_{1},\dots,Z_{n}$ be independent real-valued r.v.’s with
$a_i\le Z_i\le b_i$ a.s.\ and $\mathbb E[Z_i]=\mu_i$. 
For every $t>0$ by Hoeffding's inequality~\citep{hoeffding:1963},
\[
  \Pr\ \Bigl[
    \bigl|\tfrac1n\sum_{i=1}^{n} Z_i - \tfrac1n\sum_{i=1}^{n} \mu_i\bigr|
    > t
  \Bigr]
  \;\le\;
  2\exp\ \Bigl(
    -\frac{2n^{2}t^{2}}{\sum_{i=1}^{n}(b_i-a_i)^{2}}
  \Bigr).
\]
In particular, if each $Z_i\ \in\ [0,1]$ then  
$\Pr\bigl[\,|\widehat{\mathcal R}_{n}-\mathcal R_{\mu_Q}|>t\bigr]
       \le 2e^{-2nt^{2}}$.
\end{lemma}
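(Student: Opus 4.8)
The plan is to prove the bound by the Chernoff-bounding (exponential Markov) method, reducing the concentration statement to a moment-generating-function (MGF) estimate for each centred summand. Write $X_i = Z_i - \mu_i$, so that each $X_i$ is a zero-mean random variable supported on an interval of length $b_i - a_i$, and set $S_n = \sum_{i=1}^n X_i$. By symmetry it suffices to control the upper tail $\Pr[S_n > nt]$: applying the identical argument to $-X_i$ handles the lower tail, and a union bound over the two events produces the leading factor of $2$ in the two-sided statement. For any $s>0$, Markov's inequality applied to the nonnegative variable $e^{sS_n}$ gives $\Pr[S_n > nt] = \Pr[e^{sS_n} > e^{snt}] \le e^{-snt}\,\mathbb{E}[e^{sS_n}]$, and independence of the $Z_i$ factorises the MGF as $\mathbb{E}[e^{sS_n}] = \prod_{i=1}^n \mathbb{E}[e^{sX_i}]$.

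The crux of the argument, and the step I expect to be the main obstacle, is the per-summand MGF bound known as Hoeffding's lemma: for a zero-mean $X$ with $a \le X \le b$ almost surely, $\mathbb{E}[e^{sX}] \le \exp\!\big(s^2(b-a)^2/8\big)$. I would establish it by studying the cumulant generating function $\psi(s) = \log \mathbb{E}[e^{sX}]$. A direct computation gives $\psi(0)=0$ and $\psi'(0)=\mathbb{E}[X]=0$ (using the zero-mean hypothesis), while $\psi''(s)$ equals the variance of $X$ under the exponentially tilted law $d\mathbb{Q}_s \propto e^{sX}\,d\mathbb{P}$. Since $X$ still takes values in $[a,b]$ under $\mathbb{Q}_s$, Popoviciu's inequality (the variance of any random variable supported on $[a,b]$ is at most $(b-a)^2/4$) bounds this tilted variance by $(b-a)^2/4$. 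A second-order Taylor expansion with Lagrange remainder then yields $\psi(s) \le \tfrac12 s^2 (b-a)^2/4 = s^2(b-a)^2/8$, which exponentiates to the lemma. Controlling the tilted-measure variance uniformly in $s$ is the delicate part; everything else is bookkeeping.

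Combining the lemma with the factorised MGF gives $\Pr[S_n > nt] \le \exp\!\big(-snt + \tfrac{s^2}{8}\sum_{i=1}^n (b_i-a_i)^2\big)$, an upper bound that is a convex quadratic in $s$ and is minimised at $s^\star = 4nt/\sum_i (b_i-a_i)^2$. Substituting $s^\star$ collapses the exponent to $-2n^2 t^2/\sum_i (b_i-a_i)^2$, and adding the symmetric lower-tail estimate delivers the two-sided inequality with the factor $2$. The stated specialisation is then immediate: setting $b_i - a_i = 1$ for every $i$ gives $\sum_i (b_i-a_i)^2 = n$, and identifying $\widehat{\mathcal R}_n = \tfrac1n\sum_i Z_i$ and $\mathcal R_{\mu_Q} = \tfrac1n\sum_i \mu_i$ turns the exponent into $-2n^2t^2/n = -2nt^2$, yielding the claimed $2e^{-2nt^2}$.
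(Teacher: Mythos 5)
Your proposal is correct: the Chernoff--Hoeffding argument (Markov's inequality on $e^{sS_n}$, factorisation by independence, Hoeffding's lemma via the tilted-measure variance bound with Popoviciu's inequality, and optimisation at $s^\star = 4nt/\sum_i(b_i-a_i)^2$ collapsing the exponent to $-2n^2t^2/\sum_i(b_i-a_i)^2$) is exactly the classical proof, and your specialisation to $[0,1]$-valued summands identifying $\widehat{\mathcal R}_n=\tfrac1n\sum_i Z_i$ and $\mathcal R_{\mu_Q}=\tfrac1n\sum_i\mu_i$ is handled correctly. The paper itself gives no proof of this lemma---it is stated as a known result with a citation to Hoeffding (1963)---so your argument simply supplies the standard proof that the citation points to.
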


\begin{lemma}[VC uniform deviation]\label{lem:VC}
Let $\mathcal{H}\subseteq\mathbb R^{\mathcal X}$ be a hypothesis class
with $\operatorname{VCdim}(\mathcal{H})=d<\infty$ when thresholded
as in Definition \ref{def:vc-dim}.
For i.i.d.\ pairs $(x_s,x'_s)\sim\mu_Q^{\otimes2}$ set
\(Z_s=\ell_f(x_s,x'_s)\in[0,1]\).
Then for any $\delta\in(0,1)$, with probability at least $1-\delta$,
\[
  \sup_{h\in\mathcal H}
  \bigl|
     \widehat{\mathcal R}_{n}(h)-\mathcal R_{\mu_Q}(h)
  \bigr|
  \;\le\;
  C\,\sqrt{\frac{d\,\log(1/\delta)}{n}} ,
\]
where \(C>0\) is an absolute constant.
\end{lemma}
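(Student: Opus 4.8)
The plan is to read $\mathcal R_{\mu_Q}(h)=\Pr_{(x,x')\sim\mu_Q^{\otimes2}}[h(x)<h(x')]$ as the expectation of the $\{0,1\}$-valued comparison loss $\ell_h(x,x')=\mathbf 1[h(x)<h(x')]$, so that $\widehat{\mathcal R}_n(h)=\tfrac1n\sum_{s=1}^n Z_s$ with $Z_s=\ell_h(x_s,x'_s)\in[0,1]$ is an empirical mean of i.i.d.\ variables. Because each pair is sampled independently from $\mu_Q^{\otimes2}$, the $Z_s$ are genuinely i.i.d.\ (no U-statistic decoupling is needed), so Lemma~\ref{lem:hoeffding} controls any single fixed $h$. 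The task therefore reduces to a uniform-over-$\mathcal H$ statement, which I would obtain from classical VC theory applied to the induced comparison class $\mathcal G=\{(x,x')\mapsto\mathbf 1[h(x)<h(x')]:h\in\mathcal H\}$ on $\mathcal X\times\mathcal X$, once its VC dimension is controlled.

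The crux, and the step I expect to be hardest, is to show $\operatorname{VCdim}(\mathcal G)\le c\,d$ for an absolute constant $c$, where $d=\operatorname{VCdim}(\mathcal H)$ is the dimension of the thresholded class of Definition~\ref{def:vc-dim} (equivalently the pseudodimension, or VC-subgraph index, of $\mathcal H$). Since $\{h(x)<h(x')\}$ depends on $h$ only through the sign of $h(x)-h(x')$, the class $\mathcal G$ is the zero-threshold version of the difference class $\mathcal D=\{(x,x')\mapsto h(x)-h(x')\}$, and $\operatorname{VCdim}(\mathcal G)$ is at most the pseudodimension of $\mathcal D$. To bound the latter by $O(d)$ I would use that the VC-subgraph index is preserved up to constant factors under composition with the fixed coordinate projections $(x,x')\mapsto x$ and $(x,x')\mapsto x'$, and under taking the difference of two VC-subgraph classes; tying both evaluations to the \emph{same} $h$ only removes shattered configurations and hence cannot increase the index. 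The real care is to land on $O(d)$ rather than $O(d\log d)$: a naive alternative that counts realisable orderings of the $2N$ evaluated points via Sauer–Shelah gives a shatter coefficient polynomial of degree $O(d)$, but then solving $2^N\le(2N)^{O(d)}$ leaks a spurious $\log d$, so I would argue through the difference class directly to keep the final rate at $\sqrt{d/n}$.

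Given $\operatorname{VCdim}(\mathcal G)=O(d)$ and losses bounded in $[0,1]$, the remainder is textbook. Symmetrisation together with Massart's finite-class lemma applied to the polynomial growth function of $\mathcal G$, refined by Dudley's chaining to remove the $\log n$ factor, yields the expectation bound $\mathbb E\bigl[\sup_{h\in\mathcal H}|\widehat{\mathcal R}_n(h)-\mathcal R_{\mu_Q}(h)|\bigr]\le C'\sqrt{d/n}$. The supremum is a function of the i.i.d.\ sample that changes by at most $1/n$ when a single pair $(x_s,x'_s)$ is altered, so McDiarmid's bounded-differences inequality upgrades this to a high-probability statement: with probability at least $1-\delta$, $\sup_{h}|\widehat{\mathcal R}_n(h)-\mathcal R_{\mu_Q}(h)|\le C'\sqrt{d/n}+\sqrt{\log(1/\delta)/(2n)}$.

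Finally I would fold the two terms into the single expression of Lemma~\ref{lem:VC}. Using $d\ge1$ and, in the regime of interest, $\log(1/\delta)\ge1$, both $\sqrt{d/n}$ and $\sqrt{\log(1/\delta)/(2n)}$ are at most $\sqrt{d\log(1/\delta)/n}$, so their sum is bounded by $C\sqrt{d\log(1/\delta)/n}$ after adjusting the absolute constant. This matches the claimed bound and makes explicit that the dependence on $\mathcal H$ enters only through its VC dimension $d$; this last absorption is the sole (mild) looseness in the chain.
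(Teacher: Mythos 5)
Your proposal is correct, and it takes a genuinely different---and in fact more careful---route than the paper. The paper's own argument is a four-line sketch: Hoeffding for a fixed hypothesis, Sauer's lemma to bound the growth function by $(en/d)^d$, a union bound over the realized labelings, and then ``solving for $\epsilon$ and absorbing the logarithmic factor into a universal $C$.'' Your chaining-based argument differs in two substantive ways. First, you explicitly address the step the paper silently skips: the uniform bound must be taken over the induced pair-comparison class $\mathcal G=\{(x,x')\mapsto\mathbf 1[h(x)<h(x')]\}$ on $\mathcal X\times\mathcal X$, whose VC dimension is not the $d$ of Definition~\ref{def:vc-dim} by fiat but must be bounded by $O(d)$ via the difference-class/VC-subgraph argument you give; the paper applies Sauer's growth bound with dimension $d$ directly to the pairwise loss class without comment. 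Second, your route of symmetrisation plus Massart/Dudley chaining, followed by McDiarmid's inequality, legitimately removes the $\log n$ factor, so the final $C$ really is an absolute constant. The paper's absorption of the $\ln(en/d)$ term into $C$ is not strictly valid, since that factor grows with $n$---the same issue resurfaces in the main theorem's proof, where the constant $C_L$ in (C.5) visibly depends on $\ln(2en/d)/\ln(1/\delta)$. So your approach buys rigor at the claimed $\sqrt{d\log(1/\delta)/n}$ rate, at the cost of heavier machinery (chaining, bounded differences) where the paper settles for a short classical sketch. One caveat you already flag honestly: your final merge of the two terms requires $\log(1/\delta)\ge1$, i.e.\ $\delta\le e^{-1}$; as stated for all $\delta\in(0,1)$ the lemma's right-hand side degenerates as $\delta\to1$, a defect of the statement itself that both your proof and the paper's inherit.
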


\begin{proof}[Sketch of Lemma \ref{lem:VC}]
For a fixed \(f\), apply Lemma \ref{lem:hoeffding} with $t=\epsilon$.
Sauer’s lemma bounds the growth function by
\(|\Pi_{\mathcal H}(n)|\le(en/d)^{d}\).
A union bound over all labelings gives  
\(\Pr[\sup_{f}|\,\widehat{\mathcal R}_{n}(f)-\mathcal R_{\mu_Q}(f)|>\epsilon]
      \le (en/d)^{d} 2e^{-2n\epsilon^2}\).
Solving for \(\epsilon\) and absorbing the logarithmic factor into a
universal \(C\) yields the stated bound.
\end{proof}

\begin{definition}[Essential diameter]
For a probability space $(\mathcal{X}, \mu_Q)$ equipped with metric $d$, the \emph{essential diameter} is
$$\Delta_Q = \text{ess sup}_{(x,x') \sim \mu_Q^{\otimes 2}} d(x, x') = \inf\{M \geq 0 : \mu_Q^{\otimes 2}(\{(x,x') : d(x,x') > M\}) = 0\}.$$
\end{definition}

\begin{definition}[Pair-wise entropy]
For a scorer $g: \mathcal{X} \to \mathbb{R}$ and measure $\mu_Q$ over $\mathcal{X}$, define
$$H(g) = -\mathbb{E}_{(x,x') \sim \mu_Q^{\otimes 2}}[p_{x,x'} \log p_{x,x'} + (1-p_{x,x'}) \log(1-p_{x,x'})],$$
where $p_{x,x'} = \mathbf{1}[g(x) > g(x')]$.
\end{definition}

\section{Bounding Misordering under Teacher Targets}
\label{app:misorder}

\label{app:pinsker-entropy}
\begin{lemma}[Lower Bound on Pair-wise Error]
\label{lem:pinsker}
For a fixed document pair $(x,x')$ under meaure $\mu_Q$, let
\[
  Z = \mathbf 1[g(x)>g(x')], 
  \quad
  p = \Pr[Z=1],
  \quad
  P = (p,1-p),
  \quad
  U = \Bigl(\tfrac12,\tfrac12\Bigr).
\]
Then the teacher's misordering probability (w.r.t. true pairwise ordering) satisfies
\[
  \epsilon(p) := \Pr[g(x) < g(x')]
  \ge \eta(H(p))
  := \frac12 - \sqrt{\frac{\log 2 - H(p)}{2}},
\]
where $H(p) = -p\log p - (1-p)\log(1-p)$ is the binary Shannon entropy (in nats), and $\eta$ is defined piecewise (monotonically decreasing) as:
\[
  \eta(h) = \frac12 - \sqrt{\frac{\log 2 - h}{2}}.
\]
Furthermore, if \( H(g) \leq H_0 \) for some \( H_0 < \log 2 \), then the expected misordering probability satisfies:
\[
  \mathbb{E}_{(x,x') \sim \mu_Q^{\otimes 2}} \Pr[g \text{ misorders } (x,x')] \geq \eta(H_0).
\]
\end{lemma}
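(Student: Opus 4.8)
The plan is to reduce the per-pair statement to Pinsker's inequality and then lift it to the stated expectation by a Jensen-plus-monotonicity argument. First I would make the probabilistic setup nondegenerate by reading $p=p_{x,x'}$ as the teacher's calibrated (soft) probability that $x\succ x'$ — the logistic surrogate $\sigma(g(x)-g(x'))$ of \citet{painsky:2020} — so that $p\in(0,1)$, $Z\sim\mathrm{Bernoulli}(p)$, and $H(p)$ is nontrivial. With $P=(p,1-p)$ and $U=(\tfrac12,\tfrac12)$ I would record the elementary identity that relative entropy to the uniform law on two symbols is the entropy deficit, $\mathrm{KL}(P\,\|\,U)=p\log(2p)+(1-p)\log\!\big(2(1-p)\big)=\log 2-H(p)$; this is the sole place the specific form $\log 2-H(p)$ originates, and it holds for binary entropy in nats.

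Next I would invoke Pinsker in the form $\delta_{\mathrm{TV}}(P,U)\le\sqrt{\tfrac12\,\mathrm{KL}(P\,\|\,U)}$ together with the two-point identity $\delta_{\mathrm{TV}}(P,U)=|p-\tfrac12|$. Combining these gives $|p-\tfrac12|\le\sqrt{(\log 2-H(p))/2}$, whence the Bayes error obeys $\min(p,1-p)=\tfrac12-|p-\tfrac12|\ge\tfrac12-\sqrt{(\log 2-H(p))/2}=\eta(H(p))$. Since the misordering probability is at least this Bayes error, $\epsilon(p)\ge\min(p,1-p)\ge\eta(H(p))$, which is the first claim; in the low-entropy regime $\eta(H(p))<0$ and the inequality is vacuous, so the content sits where $H(p)$ is near $\log 2$ and $\eta$ is near $\tfrac12$.

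For the expectation I would average the per-pair bound over $(x,x')\sim\mu_Q^{\otimes2}$ to obtain $\mathbb{E}[\epsilon]\ge\mathbb{E}[\eta(H(p_{x,x'}))]$ and then move the expectation inside $\eta$. The enabling facts are that $\eta$ is monotone and convex on $[0,\log 2]$: $\eta'(h)=\tfrac{1}{2\sqrt2}(\log 2-h)^{-1/2}>0$ and $\eta''(h)=\tfrac{1}{4\sqrt2}(\log 2-h)^{-3/2}>0$. Convexity and Jensen yield $\mathbb{E}[\eta(H(p_{x,x'}))]\ge\eta\big(\mathbb{E}[H(p_{x,x'})]\big)=\eta(H(g))$, and monotonicity of $\eta$ combined with the entropy hypothesis then transfers $\eta(H(g))$ to $\eta(H_0)$.

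The main obstacle is this final lifting rather than the per-pair bound, which is essentially bookkeeping around Pinsker. One must verify that convexity is used so that Jensen supplies a \emph{lower} bound, and that the monotonicity of $\eta$ is matched to the side on which the entropy constraint controls $H(g)$, so that the passage from $\eta(H(g))$ to $\eta(H_0)$ runs in the correct direction — since $\eta$ is increasing in $h$, this is the step where the precise form of the entropy condition must be pinned down. A minor secondary point is extending the two-point Pinsker manipulation and the identity $\mathrm{KL}(P\|U)=\log 2-H(p)$ to the endpoints $p\in\{0,1\}$ by continuity.
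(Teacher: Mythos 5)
Your proposal follows essentially the same route as the paper's proof: the two-point identity $D_{\mathrm{KL}}(P\,\Vert\,U)=\log 2-H(p)$, Pinsker's inequality to bound $|p-\tfrac12|$, the Bayes-error identity $\min(p,1-p)=\tfrac12-|p-\tfrac12|$, and then Jensen's inequality via the convexity of $\eta$ to lift the per-pair bound to $\mathbb{E}\bigl[\eta(H(p_{x,x'}))\bigr]\ge\eta(H(g))$ --- each of these steps appears in the paper's own argument in the same order. Where you go beyond the paper is your closing caveat, and you are right to press on it: $\eta$ is monotonically \emph{increasing} in $h$ (the lemma's ``monotonically decreasing'' label is incorrect), so the hypothesis $H(g)\le H_0$ yields $\eta(H(g))\le\eta(H_0)$, which runs the wrong way for the claimed conclusion $\mathbb{E}\,\Pr[g\text{ misorders}]\ge\eta(H_0)$; notably, the paper's own proof stops at $\eta(H(g))$ and never actually performs that final transfer, so the ``Furthermore'' clause as stated is only valid under the reversed hypothesis $H(g)\ge H_0$ or with $\eta(H_0)$ replaced by $\eta(H(g))$. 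Your softened reading of $p$ as a calibrated pairwise probability is also a sensible repair for the degeneracy that the paper's literal definition $p_{x,x'}=\mathbf 1[g(x)>g(x')]\in\{0,1\}$ would otherwise force at the per-pair level, where $H(p)=0$ and the bound is vacuous.
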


\begin{proof}
By Pinsker's inequality following \citet{painsky:2020}, for distributions $P$ and uniform distribution $U$ :
\[
  \|P - U\|_{\mathrm{TV}} = | p - \tfrac12 | \leq \sqrt{\frac12 D_{\mathrm{KL}}(P \| U)}.
\]
We compute:
\begin{align*}
D_{\mathrm{KL}}(P || U) &= p \log \frac{p}{1/2} + (1-p)\log \frac{1-p}{1/2} \\
&= p \log(2p) + (1-p) \log(2(1-p)) \\
&= p \log p + p \log 2 + (1-p) \log (1-p) + (1-p) \log 2 \\
&= p \log p + (1-p) \log (1-p) + \log 2 \\
&= -H(p) + \log 2.
\end{align*}
Thus:
\[
  | p - \tfrac12 | \leq \sqrt{\frac{\log 2 - H(p)}{2}}.
\]
The misordering probability is $\epsilon(p) = \min(p,1-p) = \tfrac12 - | p - \tfrac12 |$. Substituting:
\[
  \epsilon(p) \geq \tfrac12 - \sqrt{\frac{\log 2 - H(p)}{2}} = \eta(H(p)).
\]

Now suppose $H(g) \leq H_0 < \log 2$. We have:
\[
  H(g) = \mathbb{E}_{(x,x') \sim \mu_q^{\otimes 2}} H(p_{x,x'}) \leq H_0,
\]
where $H(p_{x,x'})$ is the entropy of the Bernoulli trial $\mathbf{1}[g(x) > g(x')]$.

By Jensen's inequality, $H$ is concave. Using the convexity of $\eta(H(p)) = \tfrac12 - \sqrt{\frac{\log 2 - H(p)}{2}}$ (since $\frac{d^2}{dH^2}\ \big( \eta(H) \big) > 0$):
\[
  \mathbb{E}_{(x,x') \sim \mu_q^{\otimes 2}} \eta(H(p_{x,x'})) \geq \eta(H(g)).
\]
However, note that:
\[
  \mathbb{E}_{(x,x') \sim \mu_q^{\otimes 2}} \Pr[g \text{ misorders } (x,x')] = \mathbb{E}_{(x,x')} \epsilon(p_{x,x'}) \geq \mathbb{E}_{(x,x')} \eta(H(p_{x,x'})),
\]
hence we have:
\[
  \mathbb{E}[\text{misordering}]
  \geq \mathbb{E}[\eta(H(p_{x,x'}))]
  \geq \eta(H(g)).
\]
\end{proof}

\begin{corollary}[Query-level bound]
\label{cor:query-entropy}
Averaging Lemma~\ref{lem:pinsker} over $(x,x')\sim\mu_Q^{\otimes2}$ replaces $H(p)$ by the pair-wise entropy $H(g)$ defined in Section \ref{app:misorder}, giving
\[
  \mathbb{E}_{(x,x') \sim \mu_Q^{\otimes 2}} \Pr\ \bigl[g \text{ mis-orders }(x,x')\bigr]
  \;\ge\;
  \eta\ \bigl(H(g)\bigr).
\]
\end{corollary}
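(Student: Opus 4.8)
The plan is to derive the query-level statement as a direct consequence of the pair-wise bound already established in Lemma~\ref{lem:pinsker}: I would integrate the pointwise inequality over the product measure $\mu_Q^{\otimes 2}$ and then pass the expectation through $\eta$ using its convexity. Writing $p_{x,x'} = \Pr[g(x) > g(x')]$ for each pair, Lemma~\ref{lem:pinsker} supplies the almost-everywhere inequality $\epsilon(p_{x,x'}) \ge \eta\bigl(H(p_{x,x'})\bigr)$, where $\epsilon(p) = \min(p,\,1-p)$ is the misordering probability and $\eta(h) = \tfrac12 - \sqrt{(\log 2 - h)/2}$ as defined in Section~\ref{app:misorder}.

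First I would take the expectation of this pointwise bound over $(x,x') \sim \mu_Q^{\otimes 2}$. Since expectation is monotone, this yields $\mathbb{E}[\epsilon(p_{x,x'})] \ge \mathbb{E}[\eta(H(p_{x,x'}))]$, with both sides finite because $\epsilon,\eta \in [0,\tfrac12]$, so no integrability question arises. Next I would verify that $\eta$, regarded as a function of its entropy argument $h$ on $[0,\log 2)$, is convex by computing $\eta''(h) = \tfrac{1}{4\sqrt2}(\log 2 - h)^{-3/2} > 0$. Jensen's inequality applied to the convex $\eta$ then gives $\mathbb{E}[\eta(H(p_{x,x'}))] \ge \eta\bigl(\mathbb{E}[H(p_{x,x'})]\bigr)$, and by the definition of pair-wise entropy the inner expectation is exactly $\mathbb{E}[H(p_{x,x'})] = H(g)$. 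Chaining the three inequalities produces $\mathbb{E}_{(x,x')\sim\mu_Q^{\otimes2}}\Pr[g \text{ mis-orders }(x,x')] \ge \eta(H(g))$, as claimed.

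The only points where I would guard against error are the direction of Jensen and the admissible range of $H(g)$. Because $\eta$ is convex, Jensen moves the expectation inside while preserving a \emph{lower} bound (the inequality would reverse for a concave map), which is precisely the direction the claim requires; I would state this explicitly to forestall the usual sign slip. For the domain, the differentiability argument needs $H(g) < \log 2$ so that $\eta$ and $\eta''$ are well defined; the boundary case $H(g) = \log 2$ I would handle separately by noting that maximal entropy forces $p_{x,x'} = \tfrac12$ almost surely (since $H(p) \le \log 2$ with equality only at $p = \tfrac12$), whence $\epsilon \equiv \tfrac12 = \eta(\log 2)$ and the bound holds with equality by continuity. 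Since this averaging step is essentially the one already carried out at the close of Lemma~\ref{lem:pinsker}, the corollary follows immediately once that lemma is in hand, and I expect no substantive obstacle beyond the convexity check.
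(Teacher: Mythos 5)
Your proposal is correct and follows essentially the same route as the paper: the paper's own argument (carried out in the closing portion of the proof of Lemma~\ref{lem:pinsker}, which the corollary merely invokes as ``averaging'') likewise takes the expectation of the pointwise bound $\epsilon(p_{x,x'}) \ge \eta\bigl(H(p_{x,x'})\bigr)$ over $\mu_Q^{\otimes 2}$ and applies Jensen's inequality to the convex $\eta$ to pull the expectation inside, giving $\mathbb{E}\bigl[\eta(H(p_{x,x'}))\bigr] \ge \eta\bigl(H(g)\bigr)$. Your explicit computation of $\eta''(h) = \tfrac{1}{4\sqrt{2}}(\log 2 - h)^{-3/2} > 0$ and your separate treatment of the boundary case $H(g) = \log 2$ (where $p_{x,x'} = \tfrac12$ a.s.\ forces equality) are minor tightenings of the same argument, not a different approach.
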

\section{Generalisation under risk from teacher entropy and locality}
\label{app:risk}
\begin{theorem}[Locality–Entropy excess-risk, unbiased case]
\label{thm:local-entropy-unbiased}
Assume the sample pairs $(\mathcal{X}_s,x'_s)_{s=1}^n$ are drawn
\emph{i.i.d.} from the same distribution $\mu_Q^{\otimes2}$
that defines the true risk.
Under the hypotheses stated in Section 2,
for any $\delta\ \in\ (0,1)$, with probability $\ge 1-\delta$
\[
  \boxed{%
    \mathcal R(\widehat f)-\mathcal R(f^\star)
    \;\le\;
    \zeta L\Delta_Q\,
    \eta\ \bigl(H(g)\bigr)
    +\;
    C\sqrt{\frac{d\,\log(1/\delta)}{n}}
  }.
\]
\end{theorem}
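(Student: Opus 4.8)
The plan is to split the excess pair-wise risk $\mathcal R(\widehat f)-\mathcal R(f^\star)$ into a \emph{statistical estimation} part, controlled by uniform convergence over $\mathcal H$, and a \emph{calibration} part that converts the minimised Bregman surrogate back into the $0/1$ misordering risk and surfaces the geometric--entropic bias $\zeta L\Delta_Q\,\eta(H(g))$. Throughout I write $L_\phi(f)=\E_{(x,x')\sim\mu_Q^{\otimes2}}[\ell(f,g;x,x')]$ for the population Bregman risk and $\widehat L_\phi$ for its empirical average over the $n$ sampled pairs. Since $\widehat f$ minimises $\widehat L_\phi$ while $f^\star$ minimises $\mathcal R$, the two objects are not directly comparable, so the skeleton is a comparison-inequality chain rather than the usual ERM ``middle term vanishes'' identity.

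\textbf{Estimation term.} First I would use the Lipschitz hypothesis to make every quantity bounded: $|f(x)-f(x')|\le L\,d(x,x')\le L\Delta_Q$ for $\mu_Q^{\otimes2}$-almost-every pair, and likewise for the teacher margins, so the Bregman integrand takes values in a fixed compact interval and rescales into $[0,1]$. With a bounded, rescaled loss I apply the VC uniform-deviation bound (Lemma~\ref{lem:VC}, which rests on Hoeffding, Lemma~\ref{lem:hoeffding}, and Sauer's lemma) to obtain, with probability at least $1-\delta$, $\sup_{h\in\mathcal H}|L_\phi(h)-\widehat L_\phi(h)|\le C'\sqrt{d\log(1/\delta)/n}$. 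Combining this with the empirical optimality $\widehat L_\phi(\widehat f)\le\widehat L_\phi(f^\star)$ gives the population-surrogate gap $L_\phi(\widehat f)-L_\phi(f^\star)\le 2C'\sqrt{d\log(1/\delta)/n}$, which after absorbing constants is the second summand of (6).

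\textbf{Calibration term.} Next I would establish a comparison inequality of the shape $\mathcal R(f)-\mathcal R(f^\star)\le \zeta L\Delta_Q\,\eta(H(g))+\bigl(L_\phi(f)-L_\phi(f^\star)\bigr)$, valid for every $f\in\mathcal H$; chaining it with the estimation bound at $f=\widehat f$ finishes the proof. The irreducible floor $\eta(H(g))$ is supplied directly by Lemma~\ref{lem:pinsker} and Corollary~\ref{cor:query-entropy}: Pinsker's inequality forces the teacher to misorder a pair with probability at least $\eta(H(g))$, so a student distilled to match $g$ inherits this as a lower bound on attainable ordering error. The scale $L\Delta_Q$ appears because a student--teacher sign disagreement on a pair demands a margin gap comparable to the true separation, and both margins lie in $[-L\Delta_Q,L\Delta_Q]$; converting a Bregman divergence on this compact interval into a misordering indicator is where the constant $\zeta$ is extracted, using the first-order (linear-remainder) form of $D_\phi$ together with $|\phi'|\le1$ to lower-bound the divergence by a $\phi$-dependent multiple of the absolute margin gap.

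\textbf{Main obstacle.} The crux is this surrogate-to-target conversion. I must pin down a calibration inequality that (i) holds uniformly for every admissible potential with $|\phi'|\le1$, so that $\zeta$ depends on $\phi$ alone and not on the data or $Q$, and (ii) cleanly factors the teacher's entropy floor $\eta(H(g))$ from the geometric scale $L\Delta_Q$ so that they enter as a product rather than entangling with the estimation term. The subtlety is that $D_\phi$ is asymmetric and its curvature varies across $[-L\Delta_Q,L\Delta_Q]$, so a uniform lower bound on its growth over that interval is needed; it is exactly the Lipschitz--diameter constraint that renders the interval compact and thereby guarantees such a uniform constant exists. Once the comparison inequality is in place, a union bound over the two uniform-convergence events at confidence $\delta/2$ each, reabsorbed into $C$, yields the stated high-probability bound (6).
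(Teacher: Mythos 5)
Your statistical half tracks the paper's proof almost step for step: the paper likewise uses the Lipschitz--diameter hypothesis to make the Bregman integrand $Z_h$ bounded by $L\Delta_Q$ (its Step~3), applies Hoeffding, Sauer's lemma and a union bound to obtain the uniform deviation $\sup_{h\in\mathcal H}\,|\widehat{\E}[Z_h]-\E[Z_h]|\le C_L\sqrt{d\ln(1/\delta)/n}$ (Step~4), and then runs the standard three-term ERM decomposition with the middle term non-positive to conclude $\E[Z_{\widehat f}]-\E[Z_{f^\star}]\le 2C_L\sqrt{d\ln(1/\delta)/n}$ (Step~5, eq.~(C.8)). Your two-event union bound at confidence $\delta/2$ each is a cosmetic variant of the paper's single sup-over-$\mathcal H$ event; nothing is lost there.

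The genuine gap is in your calibration half. You posit, for every $f\in\mathcal H$, the comparison inequality $\mathcal R(f)-\mathcal R(f^\star)\le \zeta L\Delta_Q\,\eta(H(g))+\bigl(L_\phi(f)-L_\phi(f^\star)\bigr)$, correctly flag it as the crux, but the mechanism you propose — a uniform lower bound on the growth of $D_\phi$ over the compact margin interval $[-L\Delta_Q,L\Delta_Q]$ — can only deliver a scale-conversion constant $\zeta$; it cannot manufacture the multiplicative factor $\eta(H(g))$. Worse, the inequality as stated is false for low-entropy teachers: since $\eta(h)=\tfrac12-\sqrt{(\log 2-h)/2}$ is negative for $h$ near $0$ (a near-deterministic teacher), taking $f=f^\star$ yields $0\le \zeta L\Delta_Q\,\eta(H(g))<0$, a contradiction. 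The paper's route avoids this trap: it pivots through the teacher via $\mathcal R(\widehat f)-\mathcal R(f^\star)=[\mathcal R(\widehat f)-\mathcal R(g)]+[\mathcal R(g)-\mathcal R(f^\star)]$, calibrates the first bracket by $\zeta\,\E[Z_{\widehat f}]$ (Step~2), bounds the second by $L\Delta_Q\,\eta(H(g))$ (Step~1), uses $\E[Z_{f^\star}]\le L\Delta_Q$ to reach the \emph{additive} bound $\zeta L\Delta_Q + L\Delta_Q\,\eta(H(g)) + \text{stat.}$ (its eq.~(C.10)), and only then merges the two bias terms into the boxed product by \emph{explicitly assuming teacher non-degeneracy}, $\eta(H(g))\ge\varepsilon>0$, absorbing $\varepsilon^{-1}$ into the constants (Step~6). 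That non-degeneracy device is exactly what your sketch is missing; without it you can prove at best the additive form, never the product. (Your observation that Lemma~\ref{lem:pinsker} gives $\eta(H(g))$ as a \emph{lower} bound on the teacher's misordering probability, subsequently deployed as an \emph{upper} bound on the teacher--Bayes gap, mirrors the same leap the paper itself makes in its Step~1, so I do not count it against you separately — but be aware the direction of the Pinsker lemma does not by itself justify that step in either write-up.)
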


\begin{proof}

\noindent \textbf{Step 1. Teacher-Bayes Error.}
First, recall that \( f^\star \) is \( L \)-Lipschitz. Therefore, for any pair \((x, x')\),
\[
  | f^\star(x) - f^\star(x') | \leq L d(x, x') \leq L \Delta_Q.
\]
The entropy \( H(g) \) of the teacher model \( g \) is assumed to be sufficiently high to approximate \( f^\star \). Specifically, the difference in risks between \( g \) and \( f^\star \) can be bounded as:
\[
  \mathcal{R}(g) - \mathcal{R}(f^\star) = \mathbb{E}\bigl[ \mathbf{1}[g(x) < g(x')] - \mathbf{1}[f^\star(x) < f^\star(x')] \bigr].
\]
This can be further controlled by an increasing function \( \eta(H(g)) \) of the entropy:
\[
  \mathcal{R}(g) - \mathcal{R}(f^\star) \leq L \Delta_Q \eta(H(g)).
\]
Here, \( \eta(\cdot) \) is introduced to capture the fact that higher entropy \( H(g) \) might lead to larger deviations in ranking. The function \( \eta \) is constructed based on Pinsker’s inequality derived in Appendix \ref{app:misorder}.

\noindent \textbf{Step 2. Student-Teacher Calibration.} 
Next, we consider the difference between the student \( \hat{f} \) and the teacher \( g \). To bound this, we use the Bregman divergence \( Z_h \) between \( \delta_h \) and \( \delta_g \). By definition, \( Z_h \) measures the discrepancy in ranking between the student and teacher.
We derive a bound that relates the Bregman divergence to the 0-1 risk difference~\citep{painsky:2020}:
\[
  \mathcal{R}(\hat{f}) - \mathcal{R}(g) \leq \zeta \mathbb{E}[Z_{\hat{f}}].
\]
Here, \( \zeta \) is used to calibrate the Bregman divergence to the 0-1 loss.

\noindent \textbf{Step 3. Population Gap.}
Under the assumption that the Bregman divergence \( D_{\phi} \) is Lipschitz in its arguments, we have:  
\[
  Z_h(x,x') - Z_{f^\star}(x,x') \leq L | \delta_h(x,x') - \delta_{f^\star}(x,x') | \leq 2 L d(x,x'),
\]
where L is the Lipschitz constant of \( D_{\phi} \). Thus, taking expectations, we get bounds like:
\[
  \mathbb{E}[Z_{\hat{f}} | - \mathbb{E}[Z_{f^\star}] \leq 2L \Delta_Q.
\]
Alternatively, since \( h \mapsto Z_h(x,x') \) is L-Lipschitz under the score difference.
\[
  Z_{h}(x, x') - Z_{f^\star}(x, x') \leq | ( h(x) - h(x') ) - ( f^\star(x) - f^\star(x') ) | \leq 2L | d(x, x') | \leq 2L \Delta_Q.
\]
This gives the population-level bound:
\[
  \mathbb{E}[Z_{\hat{f}}] - \mathbb{E}[Z_{f^\star}] \leq L \Delta_Q.
\]

\noindent\textbf{Step 4. Uniform deviation (VC).}
For each fixed $h\in\mathcal H$ let
\[
  S_n(h)
  :=\widehat{\mathbb E}[Z_h]-\mathbb E[Z_h]
  =\frac1n\sum_{s=1}^{n}\bigl(Z_h(x_s,x'_s)-\mathbb E Z_h\bigr).
\tag{C.1}\label{eq:Sn}
\]
Because $0\le Z_h\le L\Delta_Q$ (Step 3) and the pairs are i.i.d.,
Hoeffding’s inequality (Lemma \ref{lem:hoeffding}) gives for any
$\varepsilon>0$
\[
  \Pr\bigl[\;|S_n(h)|>\varepsilon\bigr]\le
  2\exp\!\bigl(-\tfrac{2n\varepsilon^{2}}{(L\Delta_Q)^{2}}\bigr).
\tag{C.2}\label{eq:hoeffding-fixed}
\]

For a sample of size $n$ the family
$\{(x,x')\mapsto\mathbf 1[Z_h(x,x')>\tau]:h\in\mathcal H,\tau\in\mathbb R\}$
shatters at most
\[
  |\Pi_{\mathcal H}(n)|\;\le\;\Bigl(\frac{en}{d}\Bigr)^{d}
\tag{C.3}\label{eq:sauer}
\]
sign patterns (Sauer’s lemma~\citep{sauer:1972}).  Hence, the union bound over
$\mathcal H$ yields
\[
  \Pr\!\bigl[\sup_{h\in\mathcal H}|S_n(h)|>\varepsilon\bigr]
  \;\le\;
  2\Bigl(\frac{en}{d}\Bigr)^{d}
    \exp\!\Bigl(-\frac{2n\varepsilon^{2}}{(L\Delta_Q)^{2}}\Bigr).
\tag{C.4}\label{eq:union}
\]

\smallskip
Solve \eqref{eq:union} for the smallest $\varepsilon$ such that the
right-hand side is $\le\delta$:
\[
  \varepsilon_\delta
  \;=\;
  L\Delta_Q
  \sqrt{\frac{d\ln\!\bigl(2en/d\bigr)+\ln\!\bigl(2/\delta\bigr)}{2n}}
  \;\;=:\;C_L\sqrt{\frac{d\ln(1/\delta)}{n}},
\tag{C.5}\label{eq:eps-delta}
\]
where $C_L:=L\Delta_Q\sqrt{\tfrac12\bigl(1+\tfrac{\ln(2en/d)}{\ln(1/\delta)}\bigr)}$.
Thus, with probability $\ge1-\delta,$
\[
  \boxed{\;
    \sup_{h\in\mathcal H}|S_n(h)|\le C_L\sqrt{\frac{d\ln(1/\delta)}{n}}
  \;}
\tag{C.6}\label{eq:vc-deviation}
\]

\noindent\textbf{Step 5. Empirical optimality of $\widehat f$.}
Write
\[
\begin{aligned}
  \mathbb E[Z_{\widehat f}]
  -\mathbb E[Z_{f^\star}]
  &=
  \underbrace{%
      \bigl(\mathbb E[Z_{\widehat f}]
           -\widehat{\mathbb E}[Z_{\widehat f}]\bigr)}_{\text{(a)}}
  + \underbrace{%
      \bigl(\widehat{\mathbb E}[Z_{\widehat f}]
           -\widehat{\mathbb E}[Z_{f^\star}]\bigr)}_{\le0\;\text{(ERM)}}
  + \underbrace{%
      \bigl(\widehat{\mathbb E}[Z_{f^\star}]
           -\mathbb E[Z_{f^\star}]\bigr)}_{\text{(b)}}.
\end{aligned}
\tag{C.7}\label{eq:triangle}
\]
Both (a) and (b) are bounded in magnitude by the
uniform deviation \eqref{eq:vc-deviation}; the middle term is
non-positive.  Hence
\[
  \boxed{\;
    \mathbb E[Z_{\widehat f}]-\mathbb E[Z_{f^\star}]
    \;\le\;
    2\,C_L\sqrt{\frac{d\ln(1/\delta)}{n}}
  \;}
\tag{C.8}\label{eq:pop-gap}
\]

\noindent\textbf{Step 6. Final assembly.}
Add and subtract $\mathcal R(g)$, then plug in the
calibration (Step 2), the misordering term (Step 1),
and the gap \eqref{eq:pop-gap}:

\[
\begin{aligned}
  \mathcal R(\widehat f)-\mathcal R(f^\star)
  &= \bigl[\mathcal R(\widehat f)-\mathcal R(g)\bigr]
     +\bigl[\mathcal R(g)-\mathcal R(f^\star)\bigr] \\[4pt]
  &\le \zeta\,\mathbb E[Z_{\widehat f}]
       + L\Delta_Q\,\eta\!\bigl(H(g)\bigr) \\[4pt]
  &\le \zeta\Bigl(\mathbb E[Z_{f^\star}]
                  +2\,C_L\sqrt{\tfrac{d\ln(1/\delta)}{n}}\Bigr)
       + L\Delta_Q\,\eta\!\bigl(H(g)\bigr).
\end{aligned}
\tag{C.9}\label{eq:assemble-1}
\]

Step 3 supplied $\mathbb E[Z_{f^\star}]\le L\Delta_Q$, so
\[
  \mathcal R(\widehat f)-\mathcal R(f^\star)
  \;\le\;
  \underbrace{\zeta L\Delta_Q}_{\text{locality}}
  \;+\;
  \underbrace{L\Delta_Q\,\eta\!\bigl(H(g)\bigr)}_{\text{entropy}}
  \;+\;
  \underbrace{2\zeta C_L\sqrt{\tfrac{d\ln(1/\delta)}{n}}}_{\text{stat.}}.
\tag{C.10}\label{eq:final-three}
\]

If the teacher is not degenerate, i.e.\ $\eta(H(g))\ge\varepsilon>0$,
then $\zeta L\Delta_Q\le\zeta L\Delta_Q\,\varepsilon^{-1}\eta(H(g))$ and
the first two bias terms merge; absorbing all numerical
constants into a single $C$ gives the form we provide in the main body of this work, this reduced form is realistic given the nature of ranking model teachers in which degenerate cases are rare due to the smooth estimations provided by neural models through activation functions such as softmax. 

\end{proof}
\section{Biased-Sampling under Negative Miners}
\label{app:biased}
\begin{corollary}[Biased–sampling bound]
\label{cor:biased}
Let $\nu_Q$ be a retrieval distribution with
$\operatorname{sup}\nu_Q\subseteq\operatorname{sup}\mu_Q$ and define
\[
  \kappa_Q
  = \sup_{x\in\operatorname{supp}\nu_Q}
      \frac{d\mu_Q}{d\nu_Q}(x)
  < \infty .
\tag{E.1}\label{eq:kappa}
\]
Assume the hypotheses of Theorem~\ref{thm:local-entropy-unbiased}.
If the empirical minimiser $\widehat f$ is trained on
$n$ i.i.d.\ pairs
$(\mathcal{X}_s,\mathcal{X}'_s)\sim\nu_Q^{\otimes2}$, then for every $\delta\in(0,1)$,
with probability at least $1-\delta$,
\[
  \mathcal R_{\mu_Q}(\widehat f)-\mathcal R_{\mu_Q}(f^{\star})
  \le
  \zeta\,L\Delta_Q\,\eta\bigl(H(g)\bigr)
  + C\sqrt{\frac{\kappa_Q\,d\,\log(1/\delta)}{n}} .
\tag{E.2}\label{eq:bias-bound}
\]
\end{corollary}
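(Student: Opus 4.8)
The plan is to treat this corollary as a change-of-measure perturbation of Theorem~\ref{thm:local-entropy-unbiased}, leaving the approximation (bias) machinery intact and re-deriving only the statistical estimation term under the sampling law $\nu_Q$. The first observation I would make is that the two bias contributions $\zeta L\Delta_Q$ and $L\Delta_Q\,\eta(H(g))$ are \emph{population} quantities defined purely through the true measure $\mu_Q$ --- the essential diameter $\Delta_Q$ and the teacher pair-wise entropy $H(g)$ --- and are therefore invariant to how training pairs are drawn. Consequently Steps~1--3 of the proof of Theorem~\ref{thm:local-entropy-unbiased} (Teacher--Bayes error, student--teacher calibration, and the population gap $\mathbb E[Z_{f^\star}]\le L\Delta_Q$) transfer verbatim, and the only object requiring modification is the uniform-deviation argument of Step~4.

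Next I would introduce the importance weight $w=\tfrac{d\mu_Q}{d\nu_Q}$, which is bounded above by $\kappa_Q$ by \eqref{eq:kappa}, mirroring the density-ratio reduction of \citet{hsu:2021}. Writing $\mathcal R_{\mu_Q}(h)=\mathbb E_{\nu_Q}[\,w\,Z_h\,]$ exhibits the reweighted empirical average $\tfrac1n\sum_s w(x'_s)\,Z_h(x_s,x'_s)$ as an unbiased estimator of the target risk, with the reweighting attached to the resampled comparison element. The VC/Sauer growth-function bound \eqref{eq:sauer} is unaffected, since multiplying each score by a fixed measurable weight does not change the thresholded class $\{(x,x')\mapsto\mathbf 1[Z_h>\tau]\}$, so the union bound \eqref{eq:union} continues to apply to the reweighted process.

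The decisive step is the concentration estimate for the reweighted summand, and this is where I expect the main difficulty. The clean route is a variance-sensitive (Bernstein/Bennett) inequality rather than the plain Hoeffding of Lemma~\ref{lem:hoeffding}: the second moment satisfies $\mathbb E_{\nu_Q}[(w\,Z_h)^2]=\mathbb E_{\mu_Q}[w\,Z_h^2]\le\kappa_Q\,(L\Delta_Q)^2$, so the variance proxy inflates by exactly one factor of $\kappa_Q$ and the deviation scales as $\sqrt{\kappa_Q}\,L\Delta_Q\sqrt{d\log(1/\delta)/n}$. Substituting this inflated statistical term into the final assembly \eqref{eq:final-three}, merging the bias terms as in the unbiased case, and absorbing $\zeta$, $L$, $\Delta_Q$ and the logarithmic Sauer factor into the universal constant $C$ then yields the stated bound \eqref{eq:bias-bound}.

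The genuinely delicate point is tracking the correct power of $\kappa_Q$. A naive bound using the worst-case range $[0,\kappa_Q L\Delta_Q]$ in Hoeffding would return a factor $\kappa_Q$ (linear) rather than the claimed $\sqrt{\kappa_Q}$; and if \emph{both} pair coordinates were reweighted, the product ratio $w(x)w(x')$ would propagate an extra factor, further degrading the dependence. Recovering the stated $\sqrt{\kappa_Q}$ therefore rests on two choices I would make explicit: using a variance-based concentration to trade the range $\kappa_Q$ for the standard-deviation scaling $\sqrt{\kappa_Q}$, and invoking the ranking convention in which the anchor is held under $\mu_Q$ while only the candidate is drawn from $\nu_Q$, so that a single importance factor enters. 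Reconciling these conventions with the nominal $\nu_Q^{\otimes2}$ sampling in the statement is the part of the argument that must be handled with care.
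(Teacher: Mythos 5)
Your proposal follows the same overall route as the paper's Appendix~\ref{app:biased} proof: introduce the importance weight $w_Q=\tfrac{d\mu_Q}{d\nu_Q}\le\kappa_Q$, write $\mathcal R_{\mu_Q}(h)$ as a $\nu_Q$-expectation of the reweighted loss, observe the envelope $|w_Q Z_h|\le\kappa_Q L\Delta_Q$, run the VC/Sauer union bound over the (unchanged) thresholded class, cancel via ERM optimality of $\widehat f$, and splice the result into the unbiased bias terms, which transfer verbatim since $\Delta_Q$, $H(g)$, $\zeta$ and $L$ are population quantities under $\mu_Q$. Where you genuinely differ is at the concentration step, and your version is the more defensible one: the paper's Step~3 simply asserts the $\sqrt{\kappa_Q}$ deviation by ``normalising by $L\Delta_Q$ and applying the VC symmetrisation bound,'' but with only the range bound $|w_QZ_h|\le\kappa_Q L\Delta_Q$ and the Hoeffding-based Lemma in its toolkit, the argument as written would produce a \emph{linear} $\kappa_Q$ factor. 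Your Bernstein-type step, using $\mathbb E_{\nu_Q}[(w_QZ_h)^2]=\mathbb E_{\mu_Q}[w_QZ_h^2]\le\kappa_Q(L\Delta_Q)^2$, is exactly what legitimises the claimed $\sqrt{\kappa_Q}$ scaling (up to the usual lower-order $\kappa_Q/n$ Bernstein term, which you should note is absorbed only when $n\gtrsim\kappa_Q d$). Your second caveat is also well placed and in fact exposes a gloss in the paper: the statement samples both coordinates from $\nu_Q^{\otimes2}$, yet the paper's Step~1 attaches a single weight $w_Q(x)$; unbiasedness under $\nu_Q^{\otimes2}$ requires the product weight $w_Q(x)w_Q(x')$, whose variance proxy is $\kappa_Q^2(L\Delta_Q)^2$, so even Bernstein then yields $\kappa_Q$, not $\sqrt{\kappa_Q}$. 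To recover (E.2) exactly one must adopt the convention you name---anchor drawn under $\mu_Q$, candidate under $\nu_Q$---or weaken the stated rate; the paper does neither explicitly, so your proposal is not merely a reproduction but a repair of the two soft spots in the published argument.
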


\begin{proof}
\noindent \textbf{Step 1: Importance weights over sample measures.}
Let $\nu_Q$ be our biased measure, following \citet{hsu:2021}, we bridge to $\mu_Q$ via
\[
  w_Q(x)=\frac{d\mu_Q}{d\nu_Q}(x)\le\kappa_Q ,
\]
the true pair-wise risk is
\[
  \mathcal R_{\mu_Q}(h)
  =\mathbb E_{\nu_Q}[w_Q(x)\,\ell_h(x,x')],
  \qquad
  \ell_h=\mathbf 1\ \bigl[h(x)<h(x')\bigr].
\]

\noindent \textbf{Step 2: The Bregman loss class.}
Let
\[
  Z_h(x,x')
  = D_\phi\ \bigl(\delta_{x,x'}(h)\,\Vert\,\delta_{x,x'}(g)\bigr) ,
  \quad
  \delta_{x,x'}(h)=h(x)-h(x').
\]
Because $|Z_h|\le L\Delta_Q$ and $0<w_Q\le\kappa_Q$,
\[
  |w_Q Z_h|\le \kappa_Q L\Delta_Q .
\]

\noindent \textbf{Step 3: Uniform deviation under $\nu_Q$.}
Normalise by $L\Delta_Q$ and apply the VC symmetrisation bound;
there exists $C>0$ such that, with probability $1-\delta$,
\[
  \sup_{h\in\mathcal H}
  \Bigl|
    \mathbb E_{\nu_Q}[w_Q Z_h]
    - \frac1n\sum_{s=1}^{n} w_Q(\mathcal{X}_s) Z_h(\mathcal{X}_s,x'_s)
  \Bigr|
  \le
  C L\Delta_Q \sqrt{\frac{\kappa_Q\,d\,\log(1/\delta)}{n}} .
\tag{E.3}\label{eq:uniform}
\]

\noindent \textbf{Step 4: Use empirical optimality of $\widehat f$.}
By definition of $\widehat f$,
\[
  \frac1n\sum_{s=1}^{n} w_Q(\mathcal{X}_s) Z_{\widehat f}(\mathcal{X}_s,x'_s)
  \le
  \frac1n\sum_{s=1}^{n} w_Q(\mathcal{X}_s) Z_{f^{\star}}(\mathcal{X}_s,x'_s).
\tag{E.4}\label{eq:opt}
\]
Apply \eqref{eq:uniform} separately with $h=\widehat f$ and $h=f^{\star}$, then subtract the two inequalities:

\[
\begin{aligned}
  \mathbb E_{\nu_Q}[w_Q Z_{\widehat f}]
  -\mathbb E_{\nu_Q}[w_Q Z_{f^{\star}}]
  &= 
  \Bigl(
    \mathbb E_{\nu_Q}[w_Q Z_{\widehat f}]
    -\tfrac1n\ \sum_{s}w_Q(\mathcal{X}_s) Z_{\widehat f}
  \Bigr)
  -
  \Bigl(
    \mathbb E_{\nu_Q}[w_Q Z_{f^{\star}}]
    -\tfrac1n\ \sum_{s}w_Q(\mathcal{X}_s) Z_{f^{\star}}
  \Bigr) \\
  &\quad
  +\;
  \Bigl(
    \tfrac1n\ \sum_{s}w_Q(\mathcal{X}_s) Z_{\widehat f}
    -\tfrac1n\ \sum_{s}w_Q(\mathcal{X}_s) Z_{f^{\star}}
  \Bigr).
\end{aligned}
\]

The first two brackets are each bounded by
\(\pm C L\Delta_Q \sqrt{\kappa_Q d \log(1/\delta)/n}\)
from \eqref{eq:uniform};
the last bracket is non-positive by \eqref{eq:opt}.
Hence

\[
  \mathbb E_{\nu_Q}[w_Q Z_{\widehat f}]
  -\mathbb E_{\nu_Q}[w_Q Z_{f^{\star}}]
  \le
  2C L\Delta_Q
     \sqrt{\frac{\kappa_Q\,d\,\log(1/\delta)}{n}} .
\tag{E.5}\label{eq:cancellation}
\]

Absorb the factor \(2\) into \(C\).

\emph{Step 5: combine with the remaining terms.}
Replace the student–teacher gap in the risk decomposition by
\eqref{eq:cancellation};
retain the Lipschitz–geometry bound
\(L\Delta_Q\) and the teacher–Bayes term
\(\eta\bigl(H(g)\bigr)\) from the unbiased case.
This yields \eqref{eq:bias-bound}.
\end{proof}

\section{Additional Experimental Details}
\label{app:details}
\subsection{Dataset Descriptions}
\label{app:datasets}

Table \ref{tab:datasets} describes all test collections in terms of their domain, queries and corpus size. In all cases we rerank BM25 ($k_1=1.2$, $b=0.75$). We found in further experimentation that due to the point-wise nature of models trained in this investigation, biases remained consistent under different re-rankers thus for conciseness we solely present BM25. 

\begin{table}[htb]
    \centering
    \caption{Descriptive statistics for all test collections, $|\mathcal{Q}|$ indicates the number of test queries, $|\mathcal{D}|$ indicates the corpus size used in retrieval and ranking. In all cases, we re-rank BM25.}
    \begin{tabular}{llrr}
    \toprule
        Dataset & Domain & |$\mathcal{Q}$| & |$\mathcal{D}$|  \\
    \midrule
        TREC Deep-Learning 2019 & Ad-Hoc Web Search & 43 & \multirow{2}{*}{8E6} \\
        TREC Deep-Learning 2020 & Ad-Hoc Web Search & 53 & \\
    \midrule
        ArguAna & Argument Retrieval & 1406 & 8.67E3 \\
        Climate-Fever & Environmental & 1535 & 542E3 \\
        CQADupStack & OpenQA & 13145 & 457E3 \\
        DBPedia & OpenQA & 400 & 463E3 \\
        FiQA & OpenQA & 648 & 57E3 \\
        HotpotQA & OpenQA & 7405 & 523E3 \\
        NFCorpus & Medical & 323 & 36E2 \\
        NQ & OpenQA & 3452 & 268E3\\
        Quora & OpenQA & 10000 & 523E3 \\
        SCIDOCS & Academic & 1000 & 25E3 \\
        SciFact & Academic & 300 & 5E3 \\
        TREC Covid & Medical & 50 & 171E3 \\
        Touche 2020 & Argument Retrieval & 49 & 382E3 \\
    \bottomrule
    \end{tabular}
    
    \label{tab:datasets}
\end{table}

\subsection{Empirical Approximations}
We apply Monte-Carlo sampling to provide empirical estimates of theoretical values outlined in Section \ref{sec:theory}, our sample size is equal to our training corpus (maximising possible samples under our setting). We compute $H_\nu(g)$ over teacher scores of training data observed during the training of each model.

As our true measure $\mu_Q$ over $\mathcal{X}$ is latent (or infeasible to compute with standard benchmarks due to query mismatch), we provide an approximation of $\kappa_Q$, the density ratio bridging our biased measure $\nu_Q$ to our unbiased risk. We do so by assuming a uniform chance of sampling all documents as opposed to a rank-biased sample taking $1/g(Q, D), \forall D \in \mathcal{X}_Q$, $\kappa_Q$ is then taken as the supremum of these values. 

To compute an empirical diameter $\widehat \Delta_Q$, we apply cosine distance over representation from an existing embedding model (we use RetroMAE~\citep{xiao:2022}, a strong embedding model based on MAE pre-training) as our measure $d$ and compute the supremum over Monte-Carlo samples from our training data over each query.

\subsection{TOST test}

A two one-sided t-test (TOST) determines if the means of two populations are equivalent based on independent samples from each population, in our case, the query-level effectiveness of two models. For means $\mu_1, \mu_2$ and confidence bound $\theta=|\mu_2-\mu_1|\cdot \epsilon$ ($\epsilon$ is a percentage bound parameter), we assess two hypotheses, that  $\mu_2-\mu_1$ lies above $\theta$ and below $\theta$ using one-sided t-tests with confidence $1-2\alpha$ compensating for multiple hypothesis testing. Thus, within an $\epsilon$ bound with confidence $1-\alpha$, we can say that $\mu_1, \mu_2$ are statistically equivalent.

\section{Out-of-domain Effectiveness}
\label{app:ood}

Table \ref{tab:ood-be} shows all BEIR splits for bi-encoder models. Table \ref{tab:ood-ce} shows all splits for cross-encoder models.

\begin{table}[htb]
  \caption{Mean nDCG@10 for architecture BE across BEIR datasets}
  \centering
  \footnotesize
  \begin{adjustbox}{width=\textwidth}
  \renewcommand{\arraystretch}{0.8}
  \setlength{\tabcolsep}{1pt}
  \begin{tabular}{llcccccccccccccccccccccccc}
  \toprule
    Loss & Domain & arguana & climate-fever & cqa-android & cqa-english & cqa-gaming & cqa-gis & cqa-mathematica & cqa-physics & cqa-programmers & cqa-stats & cqa-tex & cqa-unix & cqa-webmasters & cqa-wordpress & dbpedia-entity & fiqa & hotpotqa & nfcorpus & nq & quora & scidocs & scifact & trec-covid & webis-touche2020 \\
  \midrule
LCE & Random & 0.414\textsuperscript{} & 0.234\textsuperscript{} & 0.319\textsuperscript{BCD} & 0.295\textsuperscript{C} & 0.393\textsuperscript{BC} & 0.220\textsuperscript{BC} & 0.169\textsuperscript{BC} & 0.308\textsuperscript{} & 0.254\textsuperscript{BC} & 0.194\textsuperscript{BCD} & 0.182\textsuperscript{} & 0.230\textsuperscript{C} & 0.244\textsuperscript{BC} & 0.195\textsuperscript{BCD} & 0.318\textsuperscript{BC} & 0.241\textsuperscript{} & 0.547\textsuperscript{} & 0.296\textsuperscript{BCD} & 0.322\textsuperscript{} & 0.810\textsuperscript{} & 0.130\textsuperscript{} & 0.534\textsuperscript{BCD} & 0.628\textsuperscript{BCD} & 0.291\textsuperscript{BCD} \\
LCE & BM25 & 0.303\textsuperscript{} & 0.193\textsuperscript{} & 0.333\textsuperscript{ACD} & 0.307\textsuperscript{} & 0.398\textsuperscript{AC} & 0.215\textsuperscript{AC} & 0.163\textsuperscript{AC} & 0.284\textsuperscript{C} & 0.267\textsuperscript{AC} & 0.201\textsuperscript{ACD} & 0.201\textsuperscript{} & 0.241\textsuperscript{} & 0.254\textsuperscript{AC} & 0.217\textsuperscript{ACD} & 0.321\textsuperscript{AC} & 0.259\textsuperscript{CD} & 0.559\textsuperscript{} & 0.291\textsuperscript{ACD} & 0.407\textsuperscript{D} & 0.798\textsuperscript{} & 0.110\textsuperscript{} & 0.468\textsuperscript{ACD} & 0.671\textsuperscript{ACD} & 0.296\textsuperscript{ACD} \\
LCE & CE & 0.281\textsuperscript{} & 0.191\textsuperscript{} & 0.317\textsuperscript{ABD} & 0.292\textsuperscript{A} & 0.392\textsuperscript{AB} & 0.204\textsuperscript{AB} & 0.154\textsuperscript{AB} & 0.282\textsuperscript{B} & 0.255\textsuperscript{AB} & 0.191\textsuperscript{ABD} & 0.193\textsuperscript{} & 0.231\textsuperscript{A} & 0.260\textsuperscript{AB} & 0.205\textsuperscript{ABD} & 0.324\textsuperscript{AB} & 0.261\textsuperscript{BD} & 0.555\textsuperscript{} & 0.283\textsuperscript{ABD} & 0.405\textsuperscript{} & 0.791\textsuperscript{} & 0.104\textsuperscript{} & 0.491\textsuperscript{ABD} & 0.673\textsuperscript{ABD} & 0.284\textsuperscript{ABD} \\
LCE & Ensemble & 0.352\textsuperscript{} & 0.199\textsuperscript{} & 0.340\textsuperscript{ABC} & 0.327\textsuperscript{} & 0.427\textsuperscript{} & 0.237\textsuperscript{} & 0.183\textsuperscript{} & 0.321\textsuperscript{} & 0.282\textsuperscript{} & 0.208\textsuperscript{ABC} & 0.214\textsuperscript{} & 0.255\textsuperscript{} & 0.284\textsuperscript{} & 0.227\textsuperscript{ABC} & 0.346\textsuperscript{} & 0.272\textsuperscript{BC} & – & 0.303\textsuperscript{ABC} & 0.407\textsuperscript{B} & 0.823\textsuperscript{} & 0.120\textsuperscript{} & 0.507\textsuperscript{ABC} & 0.685\textsuperscript{ABC} & 0.336\textsuperscript{ABC} \\
RankNet & Random & 0.181\textsuperscript{} & 0.086\textsuperscript{} & 0.144\textsuperscript{} & 0.145\textsuperscript{} & 0.125\textsuperscript{} & 0.104\textsuperscript{} & 0.100\textsuperscript{} & 0.151\textsuperscript{} & 0.145\textsuperscript{} & 0.105\textsuperscript{} & 0.099\textsuperscript{} & 0.119\textsuperscript{} & 0.120\textsuperscript{} & 0.100\textsuperscript{} & 0.247\textsuperscript{} & 0.142\textsuperscript{} & 0.335\textsuperscript{} & 0.287\textsuperscript{BCD} & 0.238\textsuperscript{} & 0.707\textsuperscript{} & 0.097\textsuperscript{} & 0.259\textsuperscript{} & 0.593\textsuperscript{BCD} & 0.237\textsuperscript{} \\
RankNet & BM25 & 0.308\textsuperscript{} & 0.229\textsuperscript{} & 0.313\textsuperscript{} & 0.299\textsuperscript{} & 0.384\textsuperscript{} & 0.205\textsuperscript{} & 0.156\textsuperscript{} & 0.278\textsuperscript{} & 0.254\textsuperscript{C} & 0.195\textsuperscript{C} & 0.179\textsuperscript{} & 0.232\textsuperscript{C} & 0.243\textsuperscript{C} & 0.194\textsuperscript{C} & 0.354\textsuperscript{CD} & 0.274\textsuperscript{CD} & 0.592\textsuperscript{} & 0.293\textsuperscript{ACD} & 0.415\textsuperscript{D} & 0.802\textsuperscript{} & 0.115\textsuperscript{} & 0.509\textsuperscript{C} & 0.657\textsuperscript{ACD} & 0.306\textsuperscript{CD} \\
RankNet & CE & 0.294\textsuperscript{} & 0.215\textsuperscript{} & 0.333\textsuperscript{D} & 0.310\textsuperscript{} & 0.402\textsuperscript{} & 0.219\textsuperscript{} & 0.168\textsuperscript{} & 0.291\textsuperscript{} & 0.256\textsuperscript{B} & 0.201\textsuperscript{B} & 0.188\textsuperscript{} & 0.238\textsuperscript{B} & 0.249\textsuperscript{B} & 0.207\textsuperscript{B} & 0.351\textsuperscript{BD} & 0.272\textsuperscript{BD} & 0.604\textsuperscript{} & 0.296\textsuperscript{ABD} & 0.420\textsuperscript{} & 0.807\textsuperscript{} & 0.113\textsuperscript{} & 0.502\textsuperscript{B} & 0.646\textsuperscript{ABD} & 0.323\textsuperscript{BD} \\
RankNet & Ensemble & 0.344\textsuperscript{} & 0.256\textsuperscript{} & 0.348\textsuperscript{C} & 0.338\textsuperscript{} & 0.444\textsuperscript{} & 0.251\textsuperscript{} & 0.185\textsuperscript{} & 0.324\textsuperscript{} & 0.281\textsuperscript{} & 0.224\textsuperscript{} & 0.212\textsuperscript{} & 0.265\textsuperscript{} & 0.274\textsuperscript{} & 0.226\textsuperscript{} & 0.363\textsuperscript{BC} & 0.282\textsuperscript{BC} & 0.599\textsuperscript{} & 0.310\textsuperscript{ABC} & 0.416\textsuperscript{B} & 0.835\textsuperscript{} & 0.131\textsuperscript{} & 0.554\textsuperscript{} & 0.681\textsuperscript{ABC} & 0.357\textsuperscript{BC} \\
mMSE & Random & 0.382\textsuperscript{} & 0.218\textsuperscript{D} & 0.324\textsuperscript{C} & 0.309\textsuperscript{C} & 0.393\textsuperscript{C} & 0.199\textsuperscript{BC} & 0.154\textsuperscript{BC} & 0.304\textsuperscript{D} & 0.257\textsuperscript{BC} & 0.182\textsuperscript{} & 0.172\textsuperscript{} & 0.231\textsuperscript{BC} & 0.251\textsuperscript{BC} & 0.184\textsuperscript{} & 0.351\textsuperscript{D} & 0.241\textsuperscript{} & 0.525\textsuperscript{} & 0.299\textsuperscript{BCD} & 0.384\textsuperscript{} & 0.818\textsuperscript{} & 0.117\textsuperscript{} & 0.498\textsuperscript{BCD} & 0.679\textsuperscript{BCD} & 0.314\textsuperscript{BCD} \\
mMSE & BM25 & 0.299\textsuperscript{} & 0.193\textsuperscript{} & 0.293\textsuperscript{} & 0.295\textsuperscript{} & 0.363\textsuperscript{} & 0.201\textsuperscript{AC} & 0.156\textsuperscript{AC} & 0.270\textsuperscript{} & 0.254\textsuperscript{AC} & 0.199\textsuperscript{CD} & 0.184\textsuperscript{} & 0.229\textsuperscript{AC} & 0.239\textsuperscript{AC} & 0.202\textsuperscript{C} & 0.335\textsuperscript{C} & 0.262\textsuperscript{C} & 0.587\textsuperscript{} & 0.288\textsuperscript{ACD} & 0.414\textsuperscript{C} & 0.816\textsuperscript{} & 0.111\textsuperscript{} & 0.496\textsuperscript{ACD} & 0.662\textsuperscript{ACD} & 0.309\textsuperscript{ACD} \\
mMSE & CE & 0.312\textsuperscript{} & 0.190\textsuperscript{} & 0.323\textsuperscript{A} & 0.306\textsuperscript{A} & 0.402\textsuperscript{A} & 0.210\textsuperscript{AB} & 0.161\textsuperscript{AB} & 0.286\textsuperscript{} & 0.261\textsuperscript{AB} & 0.207\textsuperscript{BD} & 0.194\textsuperscript{} & 0.236\textsuperscript{AB} & 0.252\textsuperscript{AB} & 0.209\textsuperscript{B} & 0.332\textsuperscript{B} & 0.262\textsuperscript{B} & 0.591\textsuperscript{} & 0.293\textsuperscript{ABD} & 0.416\textsuperscript{B} & 0.816\textsuperscript{} & 0.108\textsuperscript{} & 0.504\textsuperscript{ABD} & 0.655\textsuperscript{ABD} & 0.313\textsuperscript{ABD} \\
mMSE & Ensemble & 0.366\textsuperscript{} & 0.219\textsuperscript{A} & 0.343\textsuperscript{} & 0.328\textsuperscript{} & 0.429\textsuperscript{} & 0.245\textsuperscript{} & 0.188\textsuperscript{} & 0.314\textsuperscript{A} & 0.282\textsuperscript{} & 0.216\textsuperscript{BC} & 0.217\textsuperscript{} & 0.262\textsuperscript{} & 0.284\textsuperscript{} & 0.231\textsuperscript{} & 0.351\textsuperscript{A} & 0.276\textsuperscript{} & 0.599\textsuperscript{} & 0.304\textsuperscript{ABC} & 0.407\textsuperscript{} & 0.840\textsuperscript{} & 0.121\textsuperscript{} & 0.537\textsuperscript{ABC} & 0.679\textsuperscript{ABC} & 0.349\textsuperscript{ABC} \\
KL & Random & 0.346\textsuperscript{} & 0.232\textsuperscript{} & 0.312\textsuperscript{BCD} & 0.308\textsuperscript{BC} & 0.405\textsuperscript{BC} & 0.218\textsuperscript{BC} & 0.164\textsuperscript{BC} & 0.302\textsuperscript{BC} & 0.259\textsuperscript{BC} & 0.197\textsuperscript{BCD} & 0.179\textsuperscript{} & 0.234\textsuperscript{BC} & 0.246\textsuperscript{BCD} & 0.201\textsuperscript{BCD} & 0.334\textsuperscript{BCD} & 0.262\textsuperscript{BCD} & 0.559\textsuperscript{} & 0.303\textsuperscript{BCD} & 0.359\textsuperscript{} & 0.745\textsuperscript{} & 0.132\textsuperscript{} & 0.539\textsuperscript{CD} & 0.665\textsuperscript{BCD} & 0.298\textsuperscript{BCD} \\
KL & BM25 & 0.324\textsuperscript{C} & 0.196\textsuperscript{} & 0.334\textsuperscript{ACD} & 0.315\textsuperscript{AC} & 0.403\textsuperscript{AC} & 0.214\textsuperscript{AC} & 0.163\textsuperscript{AC} & 0.291\textsuperscript{AC} & 0.267\textsuperscript{AC} & 0.202\textsuperscript{ACD} & 0.195\textsuperscript{} & 0.241\textsuperscript{AC} & 0.253\textsuperscript{ACD} & 0.215\textsuperscript{ACD} & 0.339\textsuperscript{ACD} & 0.275\textsuperscript{ACD} & 0.559\textsuperscript{} & 0.286\textsuperscript{ACD} & 0.414\textsuperscript{CD} & 0.806\textsuperscript{} & 0.109\textsuperscript{} & 0.466\textsuperscript{} & 0.687\textsuperscript{ACD} & 0.322\textsuperscript{ACD} \\
KL & CE & 0.321\textsuperscript{B} & 0.207\textsuperscript{D} & 0.324\textsuperscript{ABD} & 0.303\textsuperscript{AB} & 0.407\textsuperscript{AB} & 0.209\textsuperscript{AB} & 0.160\textsuperscript{AB} & 0.301\textsuperscript{AB} & 0.258\textsuperscript{AB} & 0.204\textsuperscript{ABD} & 0.193\textsuperscript{} & 0.237\textsuperscript{AB} & 0.265\textsuperscript{ABD} & 0.210\textsuperscript{ABD} & 0.340\textsuperscript{ABD} & 0.268\textsuperscript{ABD} & 0.565\textsuperscript{} & 0.289\textsuperscript{ABD} & 0.413\textsuperscript{BD} & 0.805\textsuperscript{} & 0.111\textsuperscript{} & 0.501\textsuperscript{AD} & 0.670\textsuperscript{ABD} & 0.323\textsuperscript{ABD} \\
KL & Ensemble & 0.354\textsuperscript{} & 0.206\textsuperscript{C} & 0.339\textsuperscript{ABC} & 0.333\textsuperscript{} & 0.431\textsuperscript{} & 0.238\textsuperscript{} & 0.184\textsuperscript{} & 0.321\textsuperscript{} & 0.281\textsuperscript{} & 0.216\textsuperscript{ABC} & 0.213\textsuperscript{} & 0.258\textsuperscript{} & 0.283\textsuperscript{ABC} & 0.228\textsuperscript{ABC} & 0.350\textsuperscript{ABC} & 0.277\textsuperscript{ABC} & 0.585\textsuperscript{} & 0.306\textsuperscript{ABC} & 0.413\textsuperscript{BC} & 0.827\textsuperscript{} & 0.120\textsuperscript{} & 0.519\textsuperscript{AC} & 0.669\textsuperscript{ABC} & 0.344\textsuperscript{ABC} \\
  \bottomrule
  \end{tabular}
  \end{adjustbox}
  \label{tab:ood-be}
\end{table}
\begin{table}[htb]
  \caption{Mean nDCG@10 for architecture CE across BEIR datasets}
  \centering
  \footnotesize
  \begin{adjustbox}{width=\textwidth}
  \renewcommand{\arraystretch}{0.8}
  \begin{tabular}{llcccccccccccccccccccccccc}
  \toprule
    Loss & Domain & arguana & climate-fever & cqa-android & cqa-english & cqa-gaming & cqa-gis & cqa-mathematica & cqa-physics & cqa-programmers & cqa-stats & cqa-tex & cqa-unix & cqa-webmasters & cqa-wordpress & dbpedia-entity & fiqa & hotpotqa & nfcorpus & nq & quora & scidocs & scifact & trec-covid & webis-touche2020 \\
  \midrule
LCE & Random & 0.311\textsuperscript{} & 0.217\textsuperscript{} & 0.386\textsuperscript{D} & 0.373\textsuperscript{} & 0.475\textsuperscript{} & 0.292\textsuperscript{} & 0.216\textsuperscript{BCD} & 0.335\textsuperscript{D} & 0.306\textsuperscript{D} & 0.265\textsuperscript{D} & 0.245\textsuperscript{D} & 0.288\textsuperscript{BCD} & 0.329\textsuperscript{} & 0.276\textsuperscript{BCD} & 0.369\textsuperscript{} & 0.317\textsuperscript{C} & 0.680\textsuperscript{} & 0.340\textsuperscript{BCD} & 0.384\textsuperscript{} & 0.795\textsuperscript{} & 0.156\textsuperscript{} & 0.665\textsuperscript{BD} & 0.658\textsuperscript{BCD} & 0.340\textsuperscript{BCD} \\
LCE & BM25 & 0.238\textsuperscript{} & 0.205\textsuperscript{} & 0.356\textsuperscript{C} & 0.318\textsuperscript{} & 0.436\textsuperscript{} & 0.253\textsuperscript{C} & 0.208\textsuperscript{ACD} & 0.305\textsuperscript{} & 0.266\textsuperscript{C} & 0.232\textsuperscript{C} & 0.226\textsuperscript{C} & 0.288\textsuperscript{ACD} & 0.284\textsuperscript{CD} & 0.249\textsuperscript{ACD} & 0.404\textsuperscript{CD} & 0.361\textsuperscript{D} & 0.686\textsuperscript{} & 0.328\textsuperscript{ACD} & 0.474\textsuperscript{CD} & 0.670\textsuperscript{} & 0.147\textsuperscript{} & 0.643\textsuperscript{AD} & 0.728\textsuperscript{ACD} & 0.339\textsuperscript{ACD} \\
LCE & CE & 0.250\textsuperscript{} & 0.190\textsuperscript{} & 0.363\textsuperscript{B} & 0.298\textsuperscript{} & 0.447\textsuperscript{D} & 0.261\textsuperscript{B} & 0.198\textsuperscript{ABD} & 0.287\textsuperscript{} & 0.267\textsuperscript{B} & 0.230\textsuperscript{B} & 0.226\textsuperscript{B} & 0.289\textsuperscript{ABD} & 0.288\textsuperscript{BD} & 0.263\textsuperscript{ABD} & 0.406\textsuperscript{BD} & 0.320\textsuperscript{A} & 0.684\textsuperscript{} & 0.311\textsuperscript{ABD} & 0.477\textsuperscript{BD} & 0.703\textsuperscript{} & 0.135\textsuperscript{} & 0.582\textsuperscript{} & 0.716\textsuperscript{ABD} & 0.327\textsuperscript{ABD} \\
LCE & Ensemble & 0.352\textsuperscript{} & 0.164\textsuperscript{} & 0.381\textsuperscript{A} & 0.349\textsuperscript{} & 0.453\textsuperscript{C} & 0.276\textsuperscript{} & 0.217\textsuperscript{ABC} & 0.326\textsuperscript{A} & 0.297\textsuperscript{A} & 0.260\textsuperscript{A} & 0.247\textsuperscript{A} & 0.288\textsuperscript{ABC} & 0.303\textsuperscript{BC} & 0.272\textsuperscript{ABC} & 0.414\textsuperscript{BC} & 0.360\textsuperscript{B} & 0.694\textsuperscript{} & 0.327\textsuperscript{ABC} & 0.474\textsuperscript{BC} & 0.746\textsuperscript{} & 0.151\textsuperscript{} & 0.661\textsuperscript{AB} & 0.700\textsuperscript{ABC} & 0.361\textsuperscript{ABC} \\
RankNet & Random & 0.363\textsuperscript{BCD} & 0.177\textsuperscript{} & 0.379\textsuperscript{BCD} & 0.376\textsuperscript{BCD} & 0.461\textsuperscript{BCD} & 0.280\textsuperscript{BCD} & 0.216\textsuperscript{BCD} & 0.334\textsuperscript{} & 0.309\textsuperscript{BCD} & 0.254\textsuperscript{BCD} & 0.230\textsuperscript{} & 0.291\textsuperscript{} & 0.324\textsuperscript{BCD} & 0.240\textsuperscript{} & 0.309\textsuperscript{} & 0.307\textsuperscript{} & 0.598\textsuperscript{} & 0.338\textsuperscript{BCD} & 0.353\textsuperscript{} & 0.633\textsuperscript{} & 0.149\textsuperscript{} & 0.672\textsuperscript{BCD} & 0.628\textsuperscript{} & 0.313\textsuperscript{} \\
RankNet & BM25 & 0.362\textsuperscript{ACD} & 0.260\textsuperscript{} & 0.379\textsuperscript{ACD} & 0.381\textsuperscript{ACD} & 0.474\textsuperscript{ACD} & 0.291\textsuperscript{ACD} & 0.219\textsuperscript{ACD} & 0.348\textsuperscript{CD} & 0.314\textsuperscript{ACD} & 0.265\textsuperscript{ACD} & 0.249\textsuperscript{CD} & 0.304\textsuperscript{CD} & 0.314\textsuperscript{ACD} & 0.277\textsuperscript{CD} & 0.422\textsuperscript{CD} & 0.368\textsuperscript{CD} & 0.720\textsuperscript{} & 0.342\textsuperscript{ACD} & 0.482\textsuperscript{C} & 0.785\textsuperscript{} & 0.154\textsuperscript{} & 0.701\textsuperscript{ACD} & 0.707\textsuperscript{CD} & 0.375\textsuperscript{CD} \\
RankNet & CE & 0.369\textsuperscript{ABD} & 0.253\textsuperscript{D} & 0.381\textsuperscript{ABD} & 0.376\textsuperscript{ABD} & 0.473\textsuperscript{ABD} & 0.288\textsuperscript{ABD} & 0.219\textsuperscript{ABD} & 0.351\textsuperscript{BD} & 0.313\textsuperscript{ABD} & 0.272\textsuperscript{ABD} & 0.247\textsuperscript{BD} & 0.304\textsuperscript{BD} & 0.313\textsuperscript{ABD} & 0.280\textsuperscript{BD} & 0.419\textsuperscript{BD} & 0.372\textsuperscript{BD} & 0.717\textsuperscript{} & 0.339\textsuperscript{ABD} & 0.482\textsuperscript{B} & 0.787\textsuperscript{} & 0.152\textsuperscript{D} & 0.701\textsuperscript{ABD} & 0.695\textsuperscript{BD} & 0.367\textsuperscript{BD} \\
RankNet & Ensemble & 0.364\textsuperscript{ABC} & 0.252\textsuperscript{C} & 0.383\textsuperscript{ABC} & 0.377\textsuperscript{ABC} & 0.469\textsuperscript{ABC} & 0.290\textsuperscript{ABC} & 0.220\textsuperscript{ABC} & 0.347\textsuperscript{BC} & 0.311\textsuperscript{ABC} & 0.265\textsuperscript{ABC} & 0.250\textsuperscript{BC} & 0.304\textsuperscript{BC} & 0.308\textsuperscript{ABC} & 0.279\textsuperscript{BC} & 0.423\textsuperscript{BC} & 0.369\textsuperscript{BC} & – & 0.342\textsuperscript{ABC} & 0.476\textsuperscript{} & 0.810\textsuperscript{} & 0.151\textsuperscript{C} & 0.719\textsuperscript{ABC} & 0.688\textsuperscript{BC} & 0.367\textsuperscript{BC} \\
mMSE & Random & 0.305\textsuperscript{} & 0.182\textsuperscript{} & 0.368\textsuperscript{BCD} & 0.365\textsuperscript{BCD} & 0.460\textsuperscript{BCD} & 0.276\textsuperscript{BCD} & 0.201\textsuperscript{} & 0.333\textsuperscript{BCD} & 0.303\textsuperscript{BCD} & 0.268\textsuperscript{BCD} & 0.237\textsuperscript{} & 0.284\textsuperscript{} & 0.305\textsuperscript{BCD} & 0.259\textsuperscript{BCD} & 0.403\textsuperscript{BCD} & 0.338\textsuperscript{} & 0.655\textsuperscript{} & 0.343\textsuperscript{BCD} & 0.451\textsuperscript{} & 0.691\textsuperscript{} & 0.148\textsuperscript{} & 0.680\textsuperscript{BCD} & 0.713\textsuperscript{BCD} & 0.340\textsuperscript{BCD} \\
mMSE & BM25 & 0.348\textsuperscript{} & 0.249\textsuperscript{} & 0.387\textsuperscript{ACD} & 0.363\textsuperscript{ACD} & 0.465\textsuperscript{ACD} & 0.282\textsuperscript{ACD} & 0.215\textsuperscript{CD} & 0.334\textsuperscript{ACD} & 0.299\textsuperscript{ACD} & 0.263\textsuperscript{ACD} & 0.243\textsuperscript{} & 0.303\textsuperscript{CD} & 0.301\textsuperscript{ACD} & 0.269\textsuperscript{ACD} & 0.421\textsuperscript{ACD} & 0.374\textsuperscript{CD} & 0.707\textsuperscript{} & 0.342\textsuperscript{ACD} & 0.485\textsuperscript{C} & 0.782\textsuperscript{} & 0.150\textsuperscript{} & 0.686\textsuperscript{ACD} & 0.726\textsuperscript{ACD} & 0.366\textsuperscript{ACD} \\
mMSE & CE & 0.357\textsuperscript{} & 0.240\textsuperscript{} & 0.390\textsuperscript{ABD} & 0.371\textsuperscript{ABD} & 0.470\textsuperscript{ABD} & 0.289\textsuperscript{ABD} & 0.222\textsuperscript{BD} & 0.340\textsuperscript{ABD} & 0.309\textsuperscript{ABD} & 0.261\textsuperscript{ABD} & 0.247\textsuperscript{D} & 0.306\textsuperscript{BD} & 0.311\textsuperscript{ABD} & 0.275\textsuperscript{ABD} & 0.416\textsuperscript{ABD} & 0.376\textsuperscript{BD} & 0.706\textsuperscript{} & 0.346\textsuperscript{ABD} & 0.485\textsuperscript{B} & 0.788\textsuperscript{} & 0.154\textsuperscript{D} & 0.703\textsuperscript{ABD} & 0.714\textsuperscript{ABD} & 0.356\textsuperscript{ABD} \\
mMSE & Ensemble & 0.367\textsuperscript{} & 0.217\textsuperscript{} & 0.384\textsuperscript{ABC} & 0.374\textsuperscript{ABC} & 0.457\textsuperscript{ABC} & 0.285\textsuperscript{ABC} & 0.218\textsuperscript{BC} & 0.334\textsuperscript{ABC} & 0.309\textsuperscript{ABC} & 0.268\textsuperscript{ABC} & 0.246\textsuperscript{C} & 0.297\textsuperscript{BC} & 0.305\textsuperscript{ABC} & 0.277\textsuperscript{ABC} & 0.423\textsuperscript{ABC} & 0.376\textsuperscript{BC} & 0.708\textsuperscript{} & 0.342\textsuperscript{ABC} & 0.480\textsuperscript{} & 0.793\textsuperscript{} & 0.155\textsuperscript{C} & 0.699\textsuperscript{ABC} & 0.693\textsuperscript{ABC} & 0.379\textsuperscript{ABC} \\
KL & Random & 0.342\textsuperscript{} & 0.180\textsuperscript{} & 0.403\textsuperscript{BCD} & 0.375\textsuperscript{} & 0.491\textsuperscript{} & 0.310\textsuperscript{} & 0.231\textsuperscript{BCD} & 0.351\textsuperscript{} & 0.318\textsuperscript{} & 0.275\textsuperscript{BCD} & 0.255\textsuperscript{} & 0.310\textsuperscript{BC} & 0.332\textsuperscript{} & 0.283\textsuperscript{BCD} & 0.383\textsuperscript{} & 0.324\textsuperscript{} & 0.630\textsuperscript{} & 0.340\textsuperscript{BCD} & 0.422\textsuperscript{} & 0.822\textsuperscript{} & 0.146\textsuperscript{} & 0.681\textsuperscript{BCD} & 0.678\textsuperscript{BCD} & 0.330\textsuperscript{BCD} \\
KL & BM25 & 0.287\textsuperscript{} & 0.204\textsuperscript{C} & 0.381\textsuperscript{ACD} & 0.331\textsuperscript{} & 0.463\textsuperscript{CD} & 0.276\textsuperscript{CD} & 0.214\textsuperscript{ACD} & 0.327\textsuperscript{CD} & 0.297\textsuperscript{CD} & 0.251\textsuperscript{ACD} & 0.240\textsuperscript{CD} & 0.299\textsuperscript{AC} & 0.303\textsuperscript{CD} & 0.271\textsuperscript{ACD} & 0.411\textsuperscript{CD} & 0.376\textsuperscript{CD} & 0.690\textsuperscript{} & 0.339\textsuperscript{ACD} & 0.482\textsuperscript{C} & 0.756\textsuperscript{} & 0.149\textsuperscript{D} & 0.682\textsuperscript{ACD} & 0.734\textsuperscript{ACD} & 0.341\textsuperscript{ACD} \\
KL & CE & 0.320\textsuperscript{} & 0.204\textsuperscript{B} & 0.387\textsuperscript{ABD} & 0.345\textsuperscript{D} & 0.469\textsuperscript{BD} & 0.278\textsuperscript{BD} & 0.222\textsuperscript{ABD} & 0.337\textsuperscript{BD} & 0.298\textsuperscript{BD} & 0.262\textsuperscript{ABD} & 0.243\textsuperscript{BD} & 0.307\textsuperscript{AB} & 0.304\textsuperscript{BD} & 0.278\textsuperscript{ABD} & 0.415\textsuperscript{BD} & 0.364\textsuperscript{BD} & 0.698\textsuperscript{} & 0.338\textsuperscript{ABD} & 0.481\textsuperscript{B} & 0.773\textsuperscript{} & 0.151\textsuperscript{} & 0.690\textsuperscript{ABD} & 0.704\textsuperscript{ABD} & 0.351\textsuperscript{ABD} \\
KL & Ensemble & 0.333\textsuperscript{} & 0.187\textsuperscript{} & 0.382\textsuperscript{ABC} & 0.347\textsuperscript{C} & 0.457\textsuperscript{BC} & 0.282\textsuperscript{BC} & 0.215\textsuperscript{ABC} & 0.323\textsuperscript{BC} & 0.292\textsuperscript{BC} & 0.257\textsuperscript{ABC} & 0.241\textsuperscript{BC} & 0.286\textsuperscript{} & 0.300\textsuperscript{BC} & 0.270\textsuperscript{ABC} & 0.416\textsuperscript{BC} & 0.367\textsuperscript{BC} & – & 0.341\textsuperscript{ABC} & 0.476\textsuperscript{} & 0.761\textsuperscript{} & 0.149\textsuperscript{B} & 0.671\textsuperscript{ABC} & 0.701\textsuperscript{ABC} & 0.395\textsuperscript{ABC} \\
  \bottomrule
  \end{tabular}
  \end{adjustbox}
  \label{tab:ood-ce}
\end{table}

\section{Additional Figures}
\label{app:figs}
\begin{figure}[tb]
  \centering
  \begin{subfigure}[b]{0.48\textwidth}
    \centering
    \includegraphics[width=\linewidth]{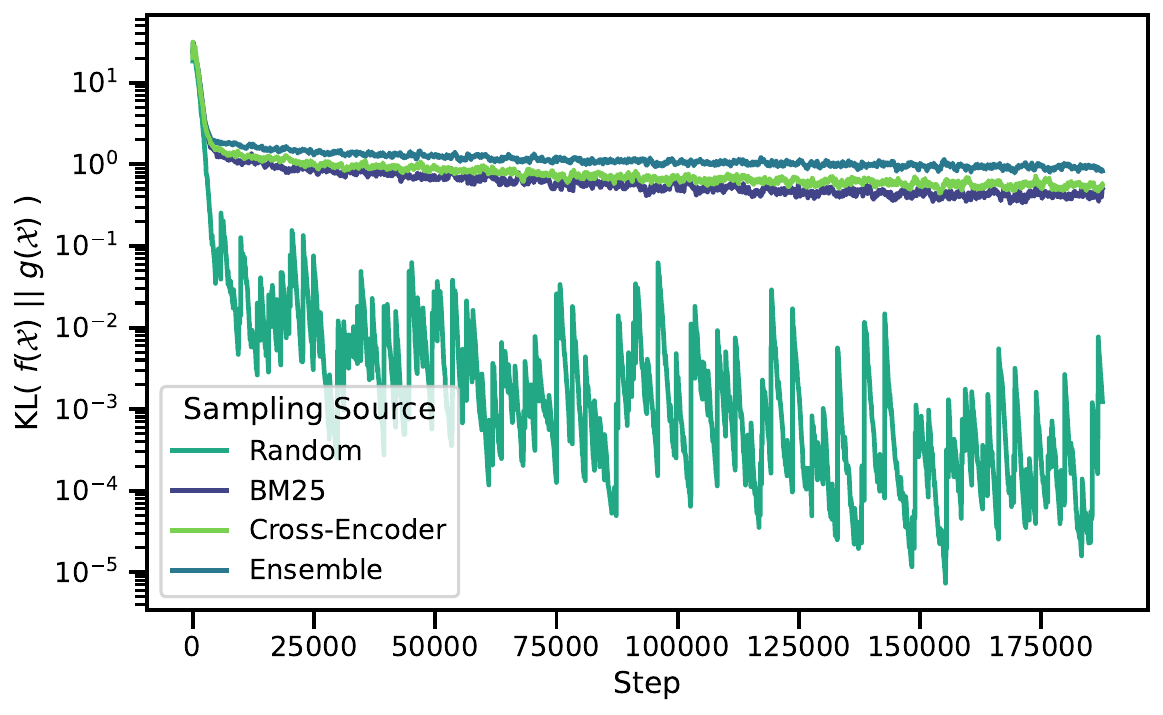}
    \caption{EMA of Loss}
    \label{fig:ema-loss}
  \end{subfigure}
  \hfill
  \begin{subfigure}[b]{0.48\textwidth}
    \centering
    \includegraphics[width=\linewidth]{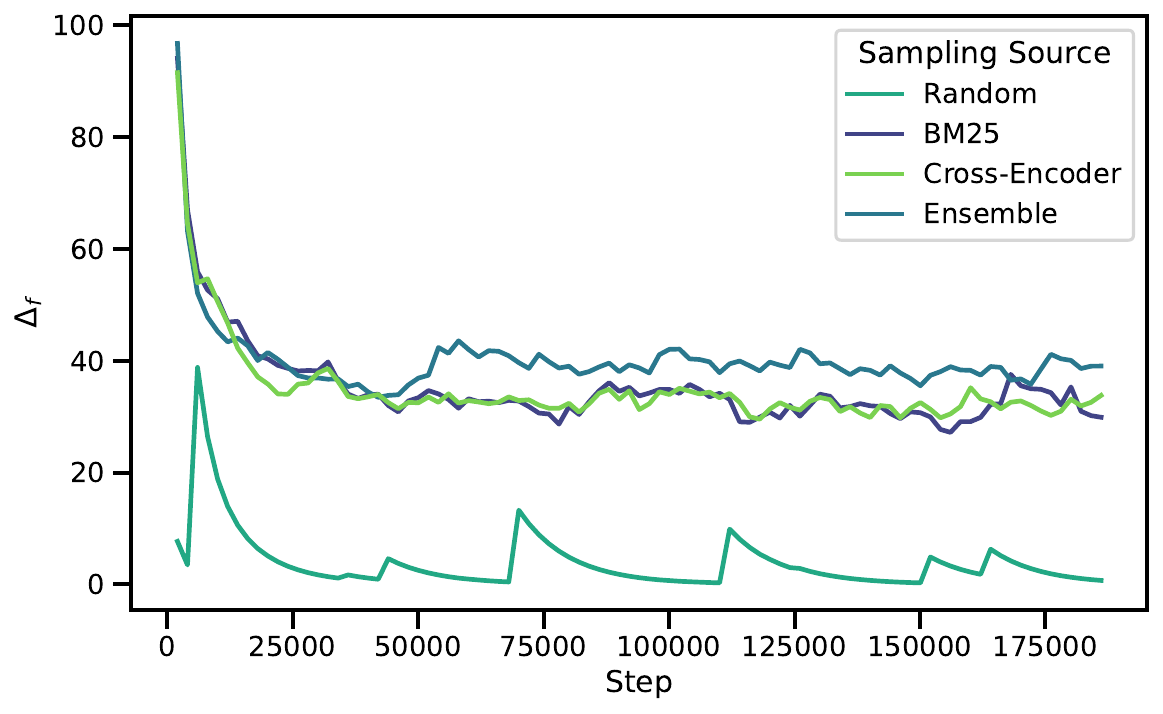}
    \caption{EMA of Grad Norm}
    \label{fig:ema-norm}
  \end{subfigure}
  \caption{Training loss in the form of the KL divergence (left) and gradient norm of the student $f$(right). Note the log scale of loss values. Observe the reduced variance in gradient under locality but otherwise minimal difference between sampling domains, we observe that loss converges marginally higher inversely with density ratio $\kappa_Q$.}
  \label{fig:ema-side-by-side}
\end{figure}

\begin{figure}[htp]
  \centering
  \begin{subfigure}[t]{0.24\textwidth}
    \centering
    \includegraphics[width=\linewidth]{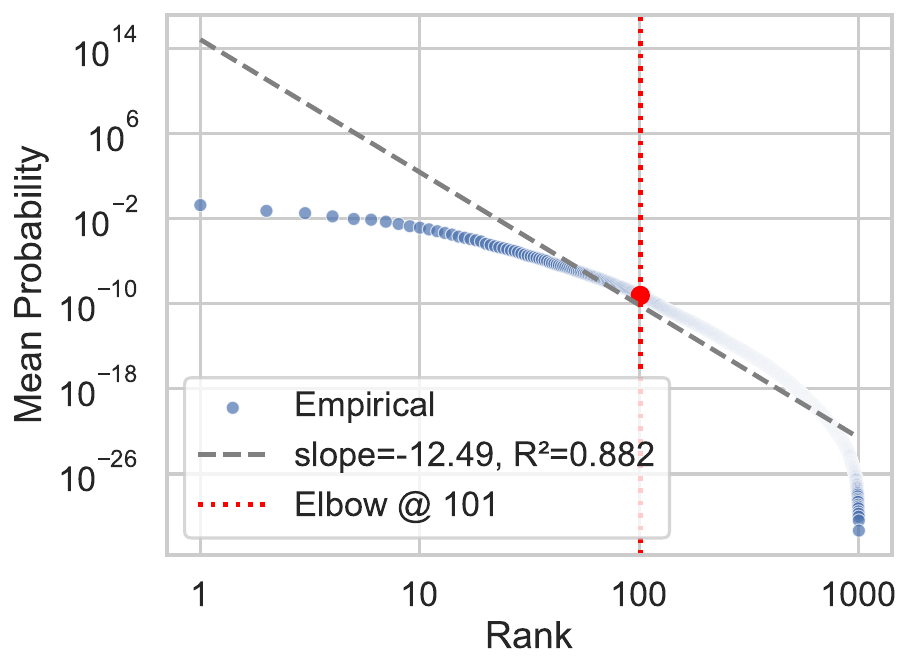}
    \caption*{\small LCE \\ Random}
  \end{subfigure}%
  \begin{subfigure}[t]{0.24\textwidth}
    \centering
    \includegraphics[width=\linewidth]{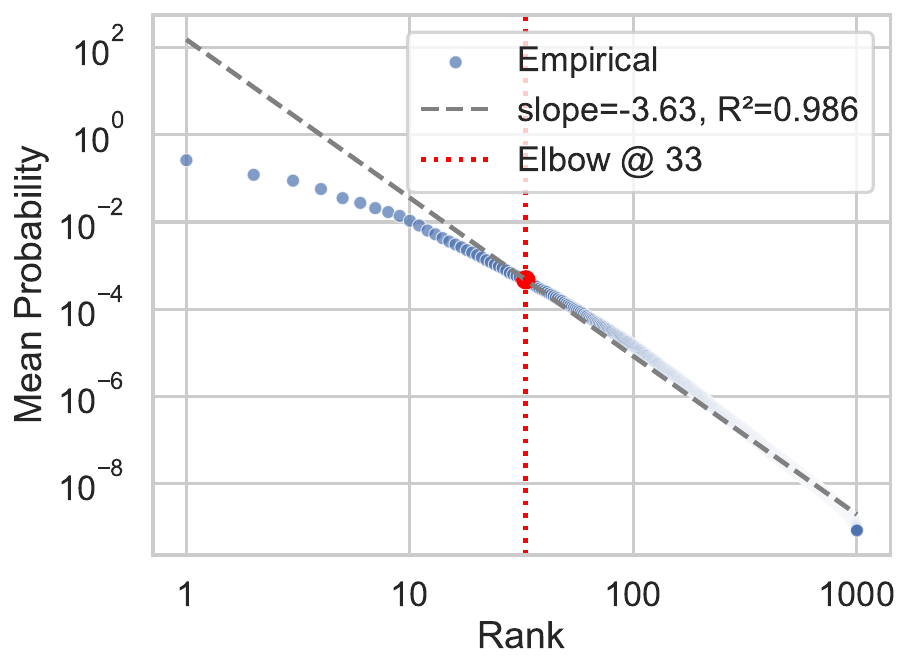}
    \caption*{\small LCE \\ BM25}
  \end{subfigure}%
  \begin{subfigure}[t]{0.24\textwidth}
    \centering
    \includegraphics[width=\linewidth]{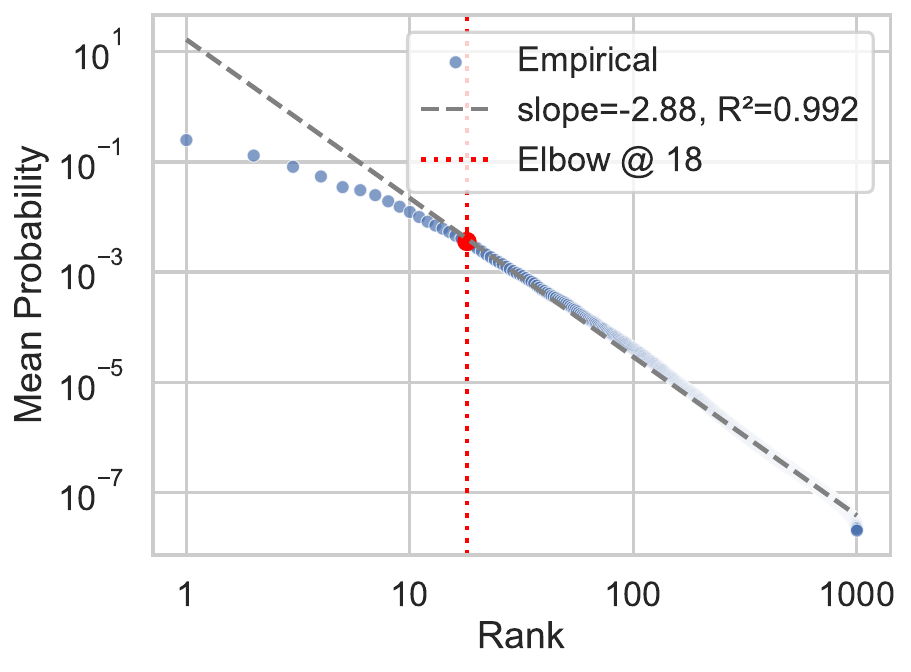}
    \caption*{\small LCE \\ Cross-Encoder}
  \end{subfigure}%
  \begin{subfigure}[t]{0.24\textwidth}
    \centering
    \includegraphics[width=\linewidth]{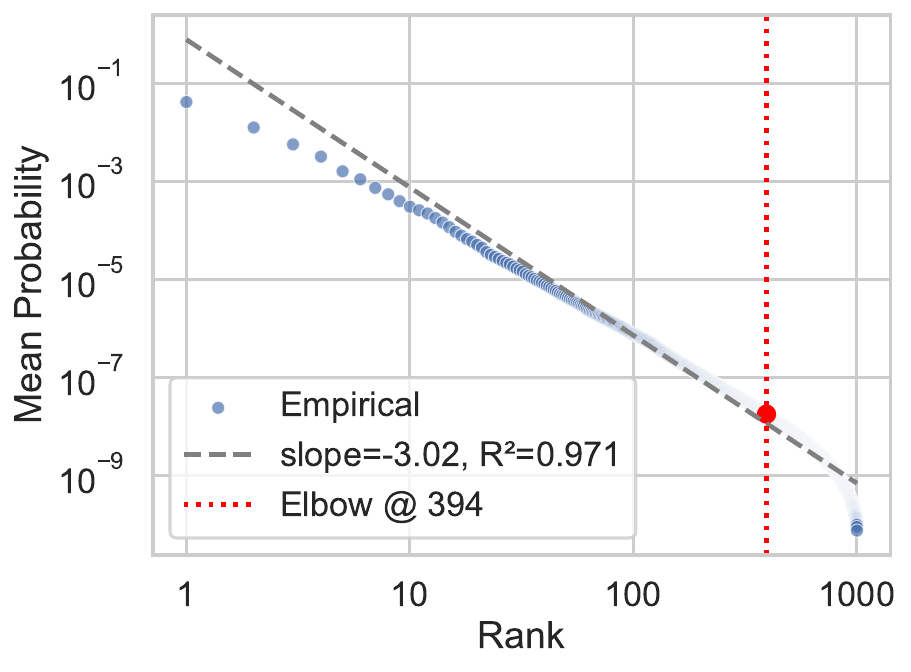}
    \caption*{\small LCE \\ Ensemble}
  \end{subfigure}

  \medskip
  \begin{subfigure}[t]{0.24\textwidth}
    \centering
    \includegraphics[width=\linewidth]{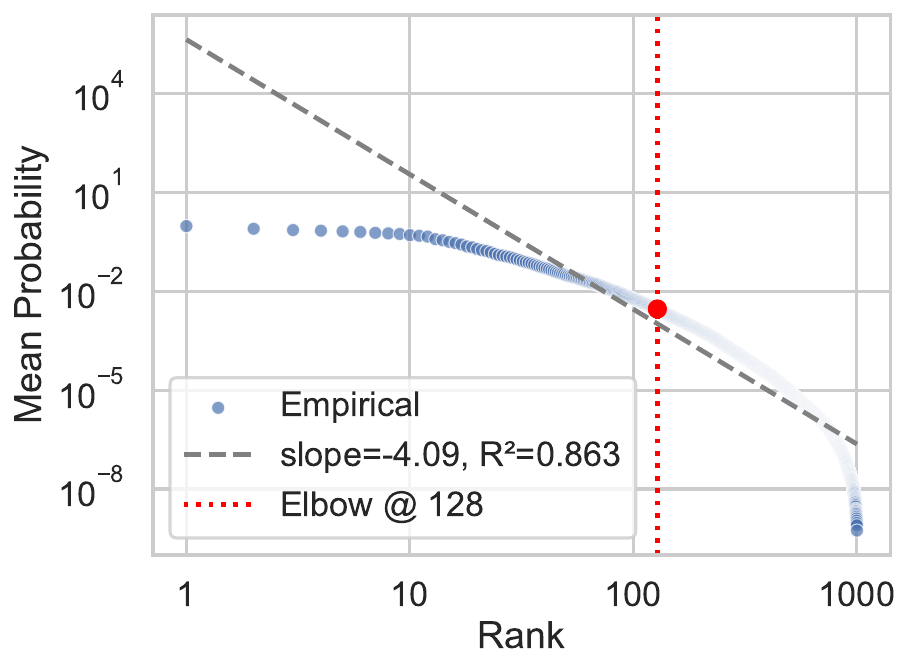}
    \caption*{\small KL Divergence \\ Random}
  \end{subfigure}%
  \begin{subfigure}[t]{0.24\textwidth}
    \centering
    \includegraphics[width=\linewidth]{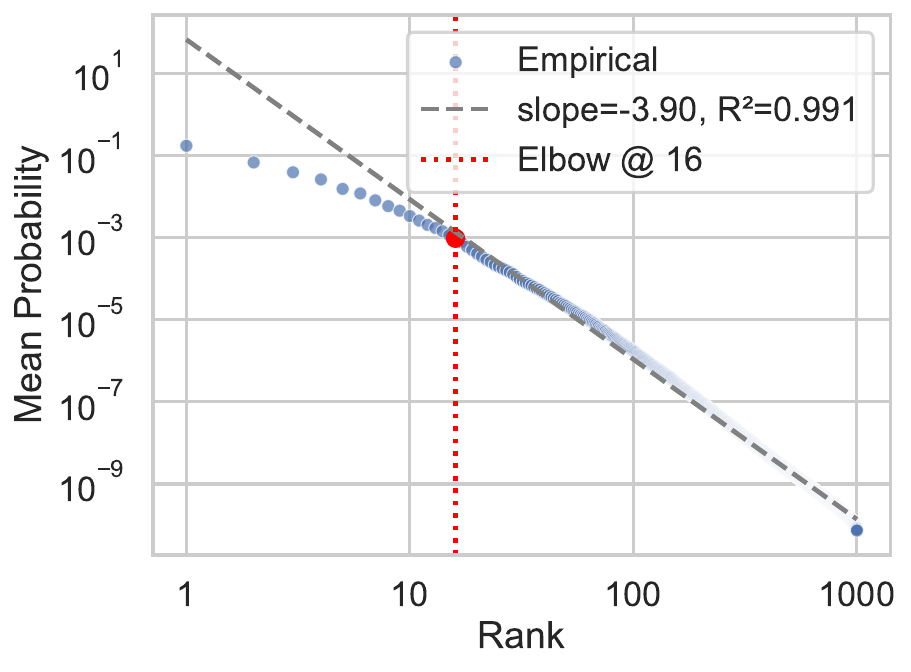}
    \caption*{\small KL Divergence \\ BM25}
  \end{subfigure}%
  \begin{subfigure}[t]{0.24\textwidth}
    \centering
    \includegraphics[width=\linewidth]{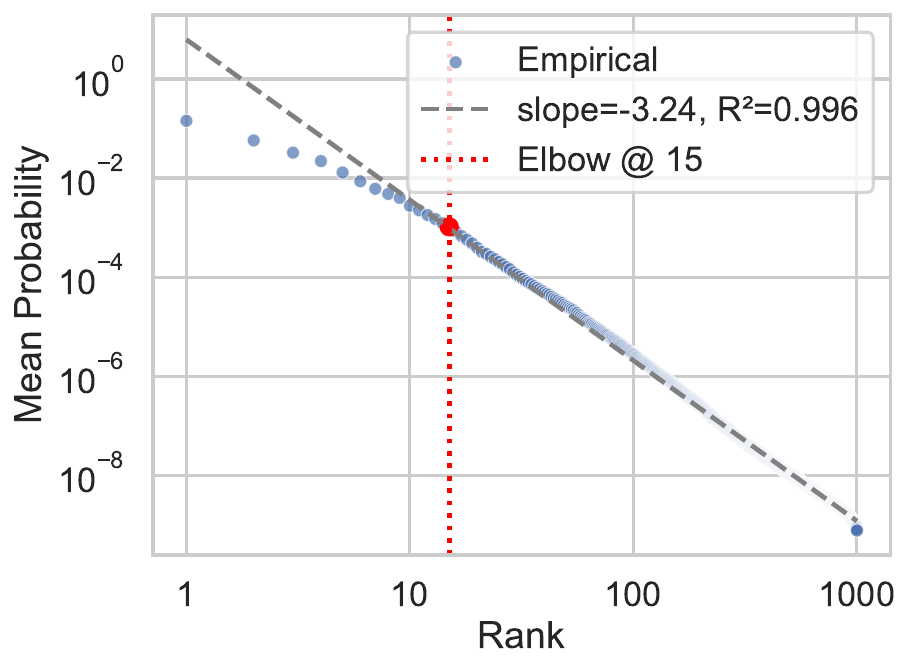}
    \caption*{\small KL Divergence \\ Cross-Encoder}
  \end{subfigure}%
  \begin{subfigure}[t]{0.24\textwidth}
    \centering
    \includegraphics[width=\linewidth]{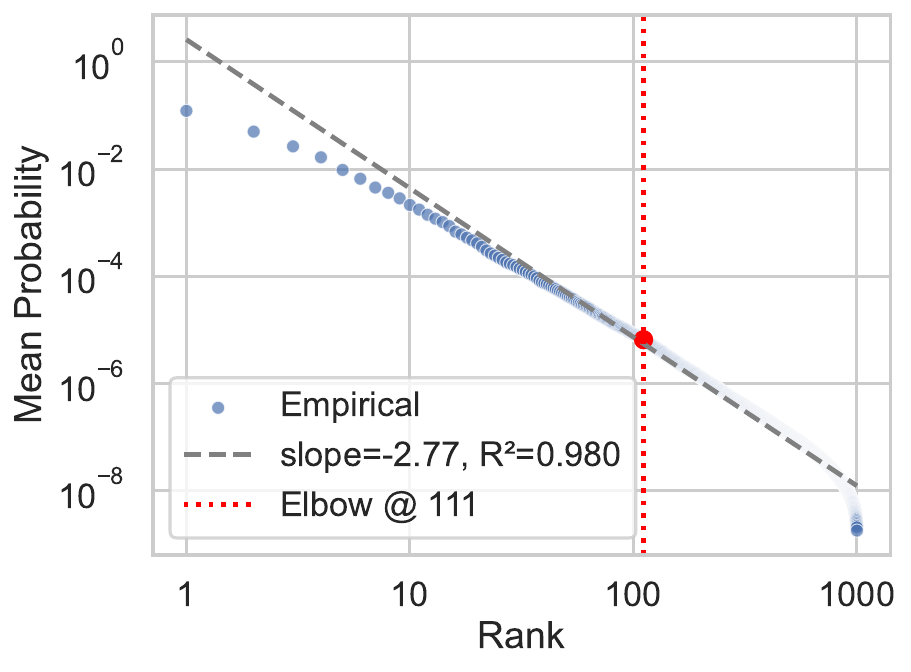}
    \caption*{\small KL Divergence \\ Ensemble}
  \end{subfigure}

  \medskip

  \begin{subfigure}[t]{0.24\textwidth}
    \centering
    \includegraphics[width=\linewidth]{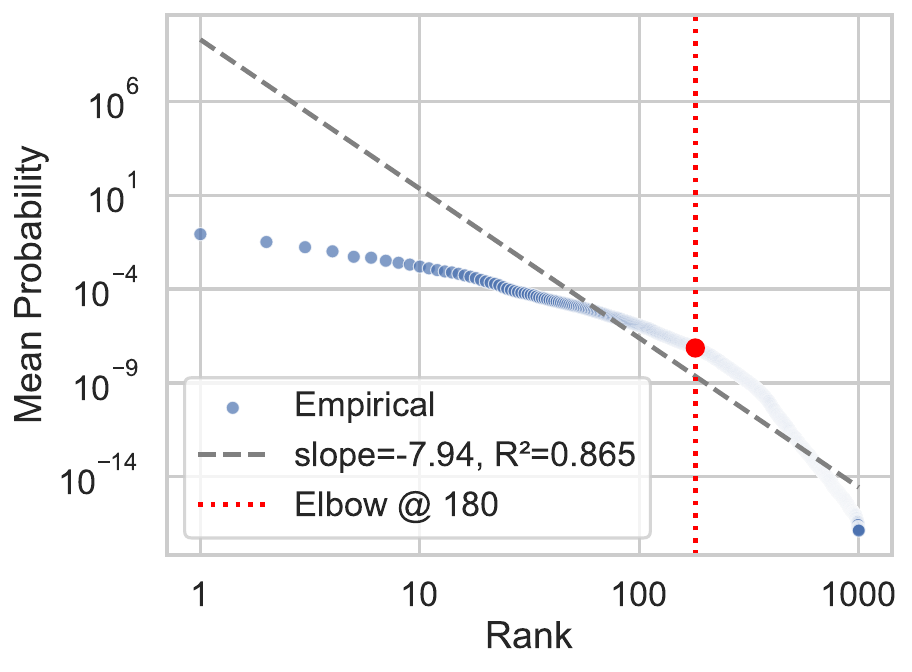}
    \caption*{\small MarginMSE \\ Random}
  \end{subfigure}%
  \begin{subfigure}[t]{0.24\textwidth}
    \centering
    \includegraphics[width=\linewidth]{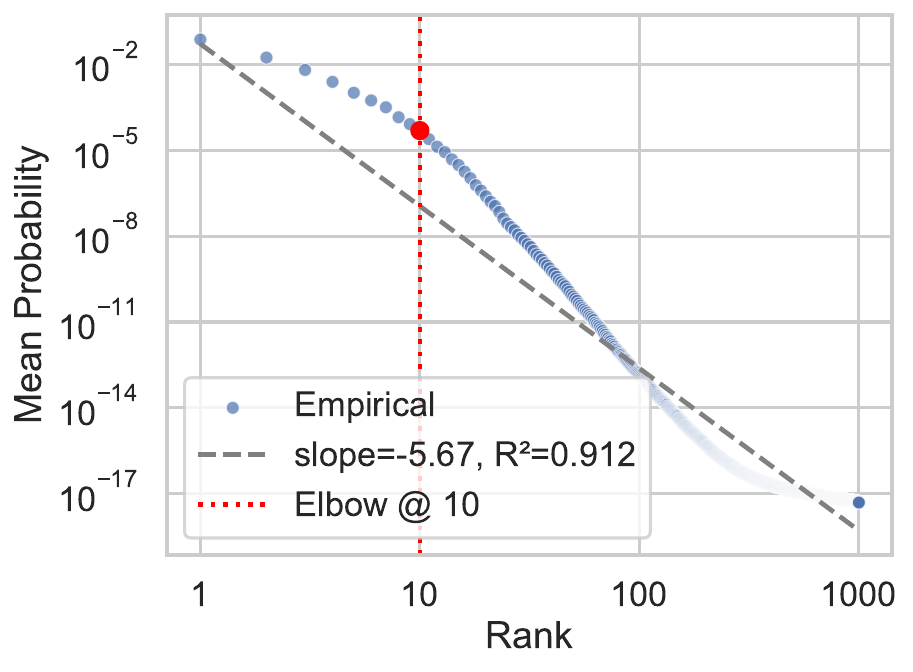}
    \caption*{\small MarginMSE \\ BM25}
  \end{subfigure}%
  \begin{subfigure}[t]{0.24\textwidth}
    \centering
    \includegraphics[width=\linewidth]{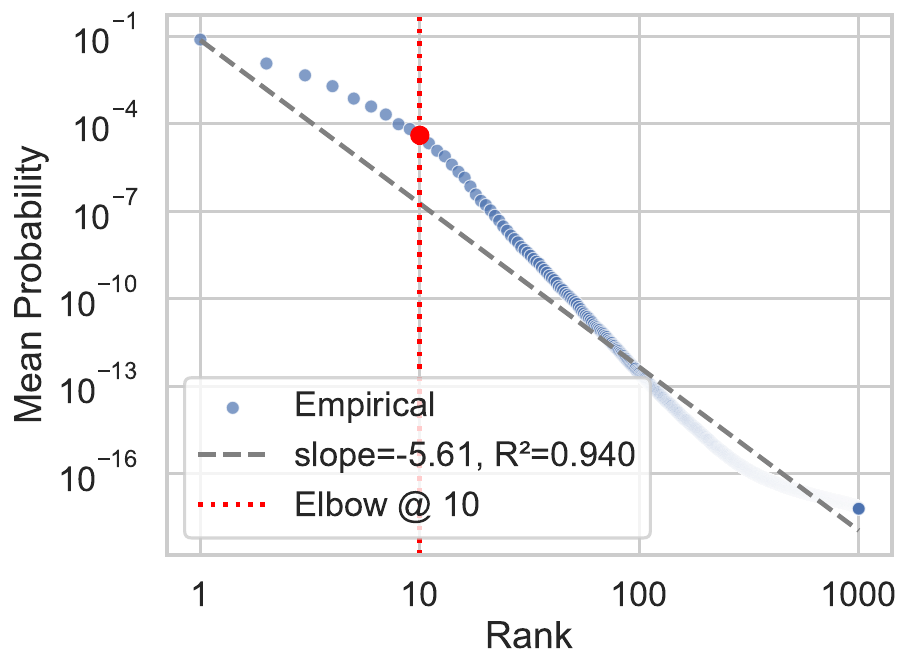}
    \caption*{\small MarginMSE \\ Cross-Encoder}
  \end{subfigure}%
  \begin{subfigure}[t]{0.24\textwidth}
    \centering
    \includegraphics[width=\linewidth]{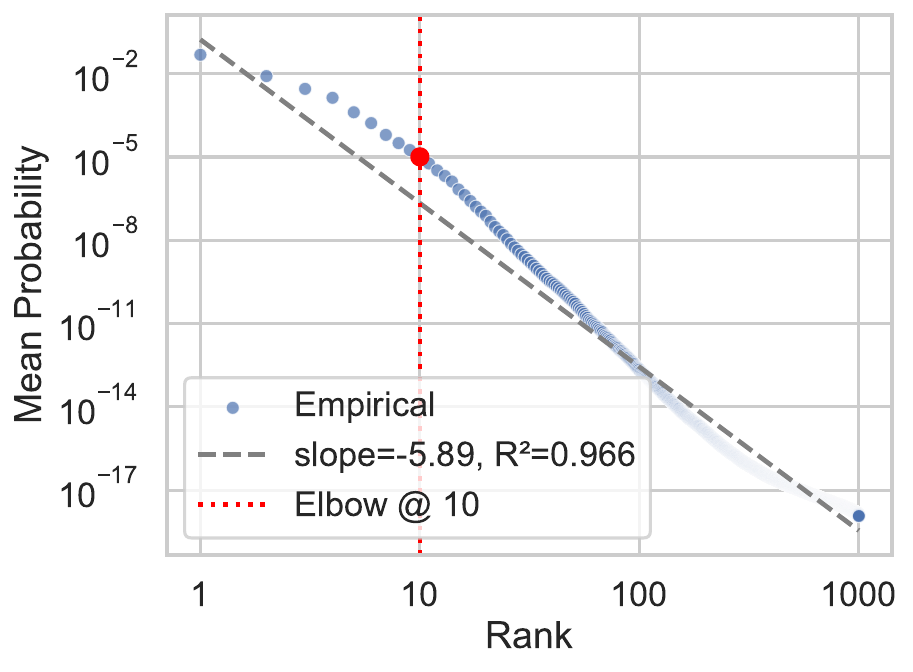}
    \caption*{\small MarginMSE \\ Ensemble}
  \end{subfigure}

  \medskip

  \begin{subfigure}[t]{0.24\textwidth}
    \centering
    \includegraphics[width=\linewidth]{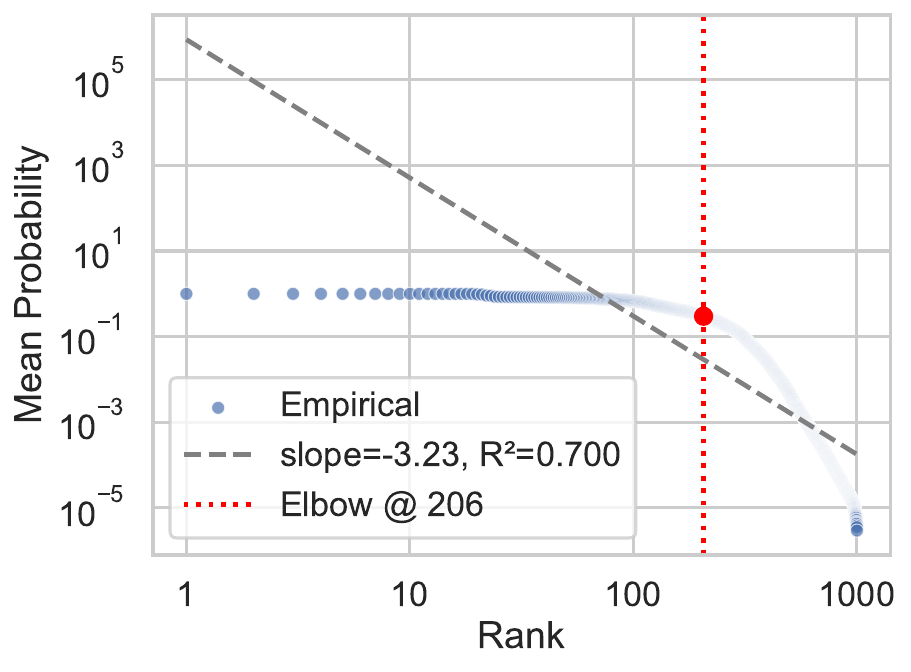}
    \caption*{\small RankNet \\ Random}
  \end{subfigure}%
  \begin{subfigure}[t]{0.24\textwidth}
    \centering
    \includegraphics[width=\linewidth]{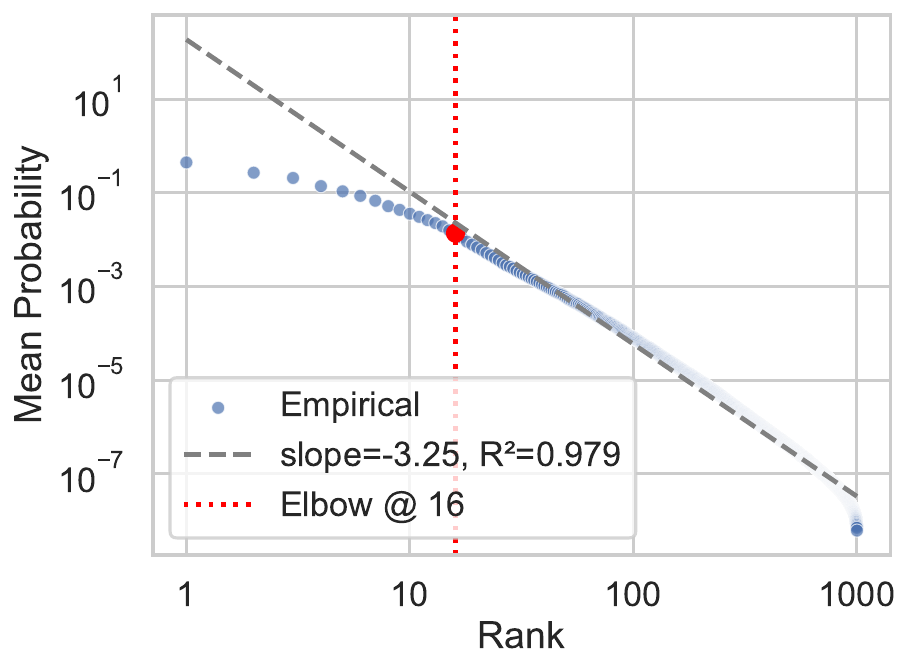}
    \caption*{\small RankNet \\ BM25}
  \end{subfigure}%
  \begin{subfigure}[t]{0.24\textwidth}
    \centering
    \includegraphics[width=\linewidth]{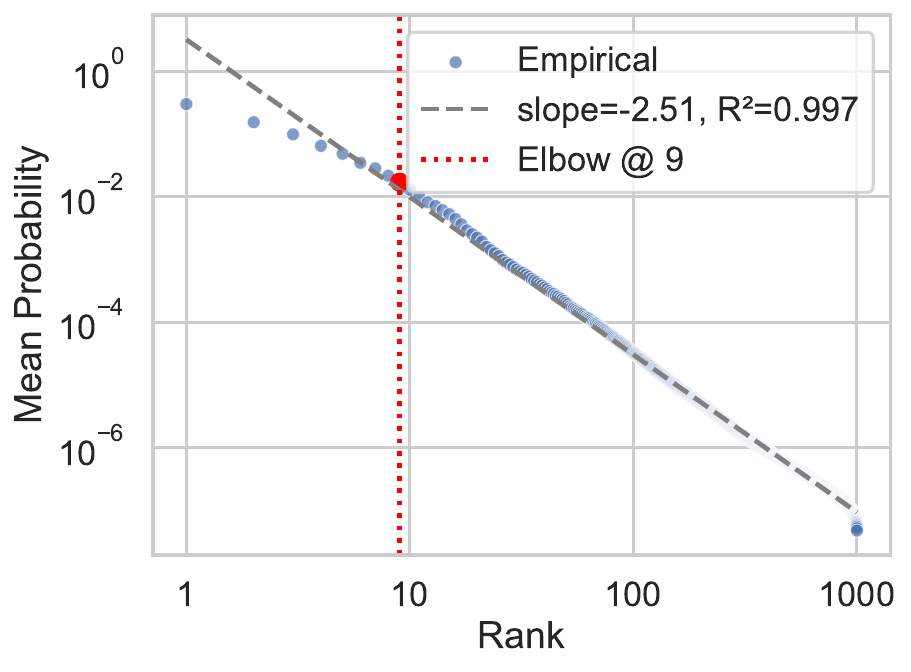}
    \caption*{\small RankNet \\ Cross-Encoder}
  \end{subfigure}%
  \begin{subfigure}[t]{0.24\textwidth}
    \centering
    \includegraphics[width=\linewidth]{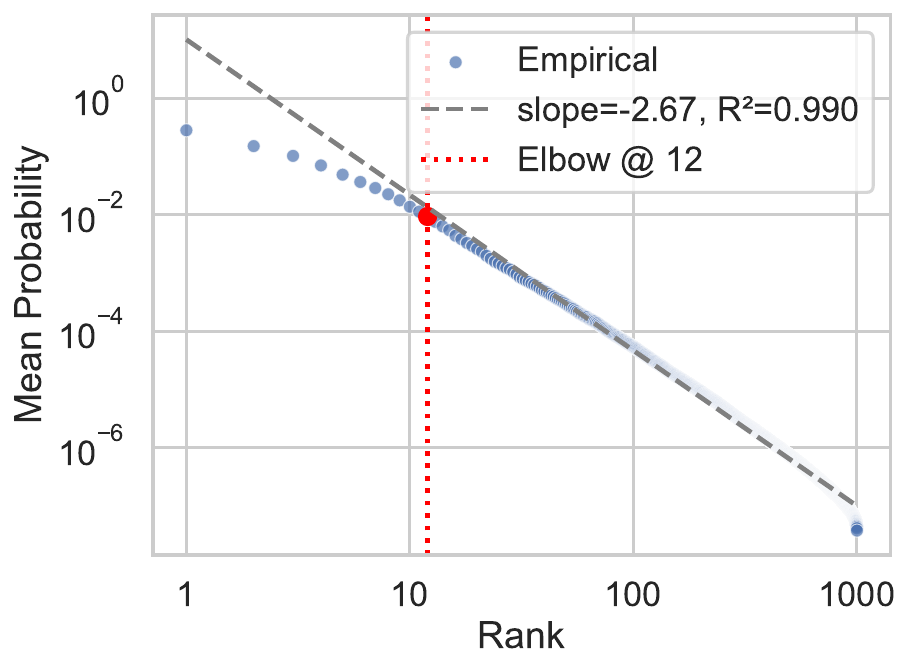}
    \caption*{\small RankNet \\ Ensemble}
  \end{subfigure}

  \caption{log-log plots of average score at each document rank on MS MARCO passage (TREC DL 2019 judged) for each loss function (rows) and domain (columns) when training a cross-encoder.}
  \label{fig:zipf_all_losses_domains}
\end{figure}

As a qualitative way to discriminate between domains, observe across Figure \ref{fig:zipf_all_losses_domains} the increasing alignment of model scores to a power law as harder negatives are applied. Under increasingly difficult negatives, the model’s score distribution stretches into a heavy‐tailed, near–power‐law form: only a few distractors receive high scores, while other documents are driven sharply downward. The slope of this tail offers a simple, domain‐agnostic measure of how confidently the model assigns relevance.  Importantly, when evaluated out of domain, this power‐law alignment vanishes entirely, potentially indicating overfitting, as indicated by reduced effectiveness out of domain in our main findings.

\section{Licenses}

\noindent \textbf{Datasets}
MSMARCO is licensed under the MIT license, strictly for non-commercial research purposes. NQ and DBPedia are provided under the CC BY-SA 3.0 license. ArguAna and Touché-2020 are provided under the CC BY 4.0 license. CQADupStack is provided under the Apache License 2.0 license. SciFact is provided under the CC BY-NC 2.0 license. SCIDOCS is provided under the GNU General Public License v3.0 license. HotpotQA is provided under the CC BY-SA 4.0 license. TREC-Covid test queries and judgements are provided under open domain however the underlying CORD-19 collection has variable licensing and we point readers to this \href{https://www.kaggle.com/datasets/allen-institute-for-ai/CORD-19-research-challenge/data?select=metadata.csv}{metadata} for more details.

\noindent \textbf{Models}
All base checkpoints (BERT and ELECTRA) are provided under Apache-2.0.

\end{document}